\newcommand{\DC}{\textsc{DoubleCoverage}\xspace}
\newcommand{\cost}{\textit{cost}}
\newcommand{\costup}{\textit{cost}^{\uparrow}}
\newcommand{\Ind}{\vvmathbb 1}
\newcommand{\TT}{\mathbb T}
\newcommand{\rt}{\vvmathbb{r}}
\newcommand{\dl}[1]{{\textcolor{orange}{#1}}}
\newtheorem{theorem}{Theorem}
\newtheorem{lemma}[theorem]{Lemma}
\newtheorem{obs}[theorem]{Observation}
\newtheorem{cor}[theorem]{Corollary}
\DeclareMathOperator{\lca}{lca}
\title{Online $k$-Taxi via Double Coverage and Time-Reverse Primal-Dual}
\author{
Niv Buchbinder\thanks{Dept. of Statistics and Operations Research, Tel Aviv University, Israel. Email: \texttt{niv.buchbinder@gmail.com}.}
\and
Christian Coester\thanks{CWI, Amsterdam, Netherlands. Email: \texttt{christian.coester@cwi.nl}.}
\and
Joseph (Seffi) Naor\thanks{Computer Science Department, Technion, Israel. Email: \texttt{naor@cs.technion.ac.il}.}
}
\date{}
\begin{document}

\begin{titlepage}
\maketitle
\thispagestyle{empty}

\begin{abstract}
	We consider the online $k$-taxi problem, a generalization of the $k$-server problem, in which $k$ servers are located in a metric space. A sequence of requests is revealed one by one, where each request is a pair of two points, representing the start and destination of a travel request by a passenger. The goal is to serve all requests while minimizing the distance traveled \emph{without carrying a passenger}.
	
	We show that the classic \emph{Double Coverage} algorithm has competitive ratio $2^k-1$ on HSTs, matching a recent lower bound for deterministic algorithms. For bounded depth HSTs, the competitive ratio turns out to be much better and we obtain tight bounds. When the depth is $d\ll k$, these bounds are approximately $k^d/d!$. By standard embedding results, we obtain a randomized algorithm for arbitrary $n$-point metrics with (polynomial) competitive ratio $O(k^c\Delta^{1/c}\log_{\Delta} n)$, where $\Delta$ is the aspect ratio and $c\ge 1$ is an arbitrary positive integer constant. The only previous known bound was $O(2^k\log n)$. For general (weighted) tree metrics, we prove the competitive ratio of Double Coverage to be $\Theta(k^d)$ for any fixed depth $d$, but unlike on HSTs it is not bounded by $2^k-1$.
	
	We obtain our results by a dual fitting analysis where the dual solution is constructed step-by-step \emph{backwards} in time. Unlike the forward-time approach typical of online primal-dual analyses, this allows us to combine information from the past and the future when assigning dual variables. We believe this method can be useful also for other problems. Using this technique, we also provide a dual fitting proof of the $k$-competitiveness of Double Coverage for the $k$-server problem on trees.
\end{abstract}
\end{titlepage}

\section{Introduction}\label{sec:intro}

The $k$-taxi problem, proposed three decades ago as a natural generalization of the $k$-server problem by Fiat et al. \cite{FiatRR90}, has
gained renewed interest recently.
In this problem there are $k$ servers, or taxis, which are located in a metric space containing $n$ points. A sequence of requests is revealed one by one to an online algorithm, where each request is a pair of two points, representing the start and destination of a travel request by a passenger. An online algorithm must serve each request (by selecting a server that travels first to its start and then its destination) without knowledge of future requests. The goal is to minimize the total distance traveled by the servers \emph{without carrying a passenger}.
The motivation for not taking into account the distance the servers travel with a passenger is that any
algorithm needs to travel from the start to the destination, independently of the algorithm's decisions. Thus, the $k$-taxi
problem seeks to only minimize the overhead travel that depends on the algorithm's decisions. While this does not affect the optimal (offline) assignment, it affects the competitive factor.

Besides scheduling taxi rides, the $k$-taxi problem also models  tasks such as
scheduling elevators (the metric space is the line), and other applications where objects need to be transported
between locations.

The extensively studied and influential $k$-server problem is the special case of the $k$-taxi problem where for each request, the start equals the destination. A classical algorithm for the $k$-server problem on tree metrics is \DC.
This algorithm is described as follows. A server $s$ is called {\em unobstructed} if there is no other server on the unique path from $s$ to the current request. To serve the request, \DC moves \emph{all} unobstructed servers towards the request at equal speed, until one of them reaches the request. If a server becomes obstructed during this process, it stops while the others keep moving.

\DC was originally proposed for the line metric, to which it owes its name, as there are at most two servers moving at once. For a line metric it achieves the optimal competitive ratio of $k$ \cite{ChrobakKPV91}, and this result was later generalized to tree metrics~\cite{ChrobakL91}.

Given the simplicity and elegance of \DC, it is only natural to analyze its performance for the $k$-taxi problem. Here, we use it only for bringing a server to the start vertex of a request.

\subsection{Related Work and Known Results}
For the $k$-server problem, the best known deterministic competitive factor on general metrics is $2k-1$ \cite{KoutsoupiasP95}; with randomization, on hierarchically well-separated trees (HSTs)\footnote{See Section \ref{sec:pre} for an exact definition of HSTs.} the best known bound is $O(\log^2k)$ \cite{BCLLM18,BGMN19}. By a standard embedding argument, this implies a bound of $O(\log^2k\log n)$ for $n$-point metrics, and it was also shown in \cite{BCLLM18} that a dynamic embedding yields a bound of $O(\log^3 k\log \Delta)$ for metrics with aspect ratio $\Delta$. In \cite{Lee18}, a more involved dynamic embedding was proposed to achieve a polylog$(k)$-competitive algorithm for general metrics.\footnote{There is a gap in the version posted to the arXiv on February 21, 2018~\cite{Lee18arxiv,Lee19personal}.} Contrast these upper bounds with the known deterministic lower bound of $k$ \cite{ManasseMS88} and the randomized lower bound of $\Omega(\log k)$ \cite{FiatKLMSY91}. More information about the $k$-server problem can be found in \cite{Koutsoupias09}.

Surprisingly, until recently very little was known about the $k$-taxi problem, in contrast to the extensive work on the $k$-server problem.
Coester and Koutsoupias \cite{CK19} provided a $(2^k-1)$-competitive memoryless randomized algorithm for the $k$-taxi problem on HSTs against an adaptive online adversary. This result implies: (i) the existence of a $4^k$-competitive deterministic algorithm for HSTs via a known reduction~\cite{Ben-DavidBKTW94}, although this argument is non-constructive; (ii) an $O(2^k\log n)$-competitive randomized algorithm for general metric spaces (against an oblivious adversary). Both bounds currently constitute the state of the art.
Coester and Koutsoupias also provided a lower bound of $2^k-1$ on the competitive factor of any deterministic algorithm for the $k$-taxi problem on HSTs, thus proving that the problem is substantially harder than the $k$-server problem.
However, large gaps still remain in our understanding of the $k$-taxi problem, and many problems remain open in both deterministic and randomized settings. For general metrics, an algorithm with competitive factor depending only on $k$ is known only if $k=2$, and for the line metric only if $k\le 3$ \cite{CK19}. Both of these algorithms can be viewed as variants of \DC.

The version of the problem where the start-to-destination distances also contribute to the objective function was called the ``easy'' $k$-taxi problem in \cite{Kosoresow96,CK19}, whereas the version we are considering here is the ``hard'' $k$-taxi problem. The easy version has the same competitive factor as the $k$-server problem~\cite{CK19}. The $k$-taxi problem was recently reintroduced as the Uber problem in \cite{DehghaniEHLS17}, who studied the easy version in a stochastic setting.

\subsection{Our Contribution}
We provide the following bounds on the competitive ratio of \DC for the $k$-taxi problem.

\begin{theorem}\label{thm:hst}
The competitive ratio of \DC for the $k$-taxi problem is at most
\begin{enumerate}[(a)]
\item $\left(c_{kd}=\sum_{h=1}^{\min\{k,d\}}\binom{k}{h}\right)$ on HSTs of depth $d$.\label{it:ThmUbHst}
\item $O(k^d)$-competitive on general (weighted) tree metrics of depth $d$.\label{it:ThmUbTree}
\end{enumerate}
\end{theorem}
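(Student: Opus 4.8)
The plan is to use the dual fitting / time-reverse primal-dual method advertised in the abstract. For the $k$-taxi problem on a tree, the natural LP relaxation has primal variables tracking, for each time step and each edge of the tree, how much ``server mass'' crosses that edge when \DC responds to a request; the dual then assigns weights to the constraints that certify feasibility of the servers' configuration. The key quantity to control is the total non-carrying movement of \DC, which on a tree decomposes edge-by-edge: on an HST of depth $d$, each edge at level $h$ (counting from the leaves) has a fixed length and contributes to the cost only when a server is dragged across it. So first I would set up this LP and its dual carefully, observing that on HSTs the edge lengths are geometrically separated, which is what lets the depth enter multiplicatively rather than exponentially.

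Next I would construct the dual solution \emph{backwards in time}, as the title suggests. The crucial structural fact about \DC is that when it moves unobstructed servers toward a request, the set of servers that actually cross a given edge $e$ is determined by which servers lie on the request-side of $e$ versus the far side; an edge at level $h$ can be crossed by at most the servers sitting in the subtree below it, and the combinatorics of ``which subset of servers can simultaneously be the unobstructed ones crossing edges along a root-to-leaf path'' is exactly what produces the binomial sum $c_{kd}=\sum_{h=1}^{\min\{k,d\}}\binom{k}{h}$: along a path of depth $d$ one chooses, for each of up to $\min\{k,d\}$ levels, which of the $k$ servers is ``responsible'' there. I would make this precise by an amortized/potential argument where the dual variable assigned to a constraint at time $t$ is allowed to look at both the past configuration and the future request (this is the feature the forward primal-dual cannot do), and I would show that (i) the dual is feasible for the $k$-taxi LP, and (ii) the dual objective is at least $\frac{1}{c_{kd}}$ times the \DC cost, while being at most $\mathrm{OPT}$; weak duality then gives part~(\ref{it:ThmUbHst}).

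For part~(\ref{it:ThmUbTree}), the same framework applies but the clean geometric separation of HSTs is lost: on a general weighted tree of depth $d$, an edge can be long even if it is close to a leaf, so the charging that gave $\binom{k}{h}$ per level must be replaced by a bound that simply counts, for each of the $d$ levels on a root-to-leaf path, a factor of $O(k)$ for the number of servers that can repeatedly oscillate across that level's edges, giving $O(k^d)$ overall. I expect the main obstacle to be exactly this step: on general trees one must handle the case where \DC moves servers back and forth across a short subpath while an adversary pays almost nothing, so the dual assignment has to carefully ``spread'' credit over a window of requests rather than pin it to a single step, and proving dual feasibility under this spreading — i.e.\ that we never over-claim on any LP constraint — is the delicate part. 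The HST case is essentially the special case where this spreading is unnecessary because the cost telescopes across levels. Verifying the base case $d=1$ (which should reproduce, up to the taxi-vs-server distinction, the classical $k$-competitiveness of \DC) and then inducting on the depth is the cleanest route, and I would present the $k$-server dual fitting proof first as a warm-up since it isolates the time-reverse construction without the extra taxi bookkeeping.
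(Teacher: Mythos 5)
Your high-level plan matches the paper's strategy — both use a time-reversed dual construction for the $k$-taxi LP on trees, and both treat the $k$-server proof as a warm-up — but several of the crucial mechanisms are missing or misstated, and these are precisely the places where the proof actually lives.

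First, the paper's dual construction is \emph{not} where $c_{kd}$ comes from. Going backwards in time, the dual solution is built with completely clean, instance-agnostic guarantees: for relocation requests the dual contribution is zero, for a small step where no server moves downward the dual contribution is at least $1$, and for a small step where some server moves downward it is at least $0$. These guarantees hold on any tree, for any $k$ and $d$. The competitive ratio $c_{kd}$ enters only through a separate \emph{potential function} that is paired with this dual; in the HST case the potential is
$\Psi_t = \sum_{i=0}^{k-1}\sum_{\ell=0}^{d-1} c_{i\ell}\,\max\{\alpha_\ell,\, h_{it}\wedge\alpha_{\ell+1}\}$
(servers ordered by height, $\alpha_\ell$ the weighted height of the $\ell$-th node layer), and the analysis hinges on the recurrence $c_{kd} = k + \sum_{i=0}^{k-1} c_{i,d-1}$. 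Your proposal conflates the dual and the potential, and your combinatorial reading of $c_{kd}$ --- ``for each of up to $\min\{k,d\}$ levels, choose which of the $k$ servers is responsible'' --- does not produce $\sum_h\binom{k}{h}$ (that reading would give $\sum_h k^h$ or similar, counting ordered choices; $\binom{k}{h}$ counts \emph{subsets} of at most $d$ servers). Without the correct potential and recurrence, the argument has no way to arrive at $c_{kd}$.

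Second, you don't address the central new difficulty of the $k$-taxi problem relative to the $k$-server problem: relocation requests, which move a server for free and can arbitrarily scramble distance-based potentials. The paper's fix on HSTs is structural — relocations are only between leaves, so server heights (and hence the potential) are invariant under relocation, and the dual contribution is made exactly zero. Your proposal never explains why the dual or the potential is stable under relocations, which is where a naive $k$-server-style argument breaks.

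Third, for part (b) your plan (``spread credit over a window of requests'') is not what the paper does, and as stated it is too vague to judge. The paper handles general weighted trees by abandoning the potential entirely, switching to the LP that charges both upward and downward movement, and constructing a dual that violates the slope constraint $A_{ut}-A_{p(u)t}\in[-1,1]$ but satisfies a relaxed depth-dependent constraint $A_{ut}-A_{p(u)t}\in[m_{d_u},M_{d_u}]$; dividing all altitudes by $\max_i\{M_i,|m_i|\}=O(k^d)$ restores feasibility and yields the competitive ratio. That asymmetric relaxation (downward slack $m_i$, upward slack $M_i$, chosen to satisfy a specific recurrence) is the idea you'd need; a credit-spreading scheme would have to reproduce it and prove dual feasibility under spreading, which you flag as ``the delicate part'' but do not resolve. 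Finally, the paper does not induct on depth; it inducts backwards in time, and the depth dependence is captured entirely inside the choice of $c_{i\ell}$ (resp.\ $m_i,M_i$).
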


We complement these upper bounds by the following lower bounds:

\begin{theorem}\label{thm:lowerbound}
The competitive ratio of \DC for the $k$-taxi problem is at least
\begin{enumerate}[(a)]
\item $\left(c_{kd}=\sum_{h=1}^{\min\{k,d\}}\binom{k}{h}\right)$  on HSTs of depth $d$.\label{it:ThmLbHst}
\item $\Omega(k^d)$ on (even unweighted) tree metrics of constant depth $d$.\label{it:ThmLbTree}
\end{enumerate}
\end{theorem}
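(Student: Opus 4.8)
Both bounds are proved by exhibiting, for each pair $(k,d)$, an $\alpha$-HST (respectively an unweighted tree) together with an \emph{adaptive} adversary — whose requests may depend on \DC's behaviour so far — and then separately upper bounding the optimal cost and lower bounding \DC's cost. Two features of the taxi problem are exploited: \DC is forced to react to the \emph{start} of every request, even one issued at a point that an optimal server occupies and \DC does not; and an optimal solution can use the \emph{destination} of a request to teleport its serving server for free, in particular across subtrees, to a parking position of its choice. For part~(a) the natural route is an induction on the depth $d$ whose inductive step reproduces the recursion $c_{k,d}=c_{k,d-1}+\binom{k}{d}$ (valid for $d\le k$; for $d\ge k$ one just runs the depth-$k$ construction, since $c_{k,d}=c_{k,k}=2^k-1$).

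\textbf{The recursive adversary on HSTs.} Work in an $\alpha$-HST of depth $d$, with $\alpha$ a large parameter and every internal node having at least $k+1$ children; recall that in an $\alpha$-HST edge lengths shrink by a factor $\alpha$ at each level from the root to the leaves. Define a routine $\textsc{Attack}(v)$ that issues requests only inside the subtree $T_v$ rooted at $v$ and that, against the optimal solution, keeps one parked server inside each child-subtree of $v$ that it has visited, relocating parked servers between subtrees only by free teleports. The base case, $v$ a node one level above the leaves, is a star: repeatedly request as the start a leaf of $T_v$ currently free of \DC servers (one exists once \DC has been dragged in) and as the destination a leaf chosen to keep \DC's servers spread out; over one cycle \DC's servers are dragged up to the centre and back out, so \DC pays about $m$ times what the optimal solution pays, where $m$ is the number of \DC servers then inside $T_v$, while the optimal solution only shuttles its single server among leaves of $T_v$ and keeps much of that free by teleporting. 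For the inductive step at an internal node $v$ of height $h\ge 2$: cycle through the children of $v$; in the round devoted to a child $v_i$, first issue a deep request in $v_i$, which forces \DC to drag servers out of the other children, up across the top edges of $T_v$, and into $v_i$, while the optimal solution never crosses these edges (its one server per child-subtree is repositioned only by teleport), and then invoke $\textsc{Attack}(v_i)$ recursively. In the limit $\alpha\to\infty$ and with sufficiently many cycles at each level, charging \DC's movement across the edges at height $h$ against the optimal cost (which, by the choice of $\alpha$, is concentrated at the leaf edges) should yield a contribution of at least $\binom{k}{h}$ times the optimal cost there — vacuously so when $h>k$ — and summing over all heights $h=1,\dots,d$ gives $\sum_{h=1}^{\min\{k,d\}}\binom{k}{h}=c_{k,d}$. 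The binomial rather than an exponential count arises because the set of servers \DC is forced to drag across the edges of a given height refines the one it is forced to drag at the height above, and the number of levels below a given height limits the size of the subsets that can be forced across it.

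\textbf{Unweighted trees (part~(b)).} Large edge-length separation is unavailable, but for \emph{constant} $d$ the target $\Omega(k^d)$ coincides with $\Omega(c_{k,d})$ up to a factor depending only on $d$, so a constant amount of slack may be spent. Subdivide each edge of the HST above into a path of unit edges, using more edges for the originally longer (shallower) edges and just one for the originally unit (leaf) edges, so that the ratio between the lengths of consecutive levels becomes a large constant $L$ while the total depth of the resulting unweighted tree stays $O(1)$. Rerunning the adversary of part~(a), the constant-factor losses incurred at each of the $d=O(1)$ levels (from having $\alpha=L$ constant rather than tending to infinity) accumulate to only a constant factor, leaving a bound of $\Omega(c_{k,d})=\Omega_d(k^d)$. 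For non-constant depth this degrades, consistently with the general-tree ratio not being bounded by $2^k-1$: on a general tree \DC can be forced into the full $\Theta(k^d)$ blow-up, whereas the nesting of the dragged subsets caps it at $c_{k,d}\le 2^k-1$ on an HST.

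\textbf{Main obstacle.} The delicate part is the quantitative coupling of the two accountings: calibrating $\alpha$ and the number of cycles per level so that the optimal cost is genuinely dominated by the leaf-edge shuttling while \DC's movement across the edges of each height $h$ still accumulates in exact proportion $\binom{k}{h}$ to it. This amounts to proving the counting lemma that identifies the server sets \DC is forced to drag across successive heights as a nested chain of subsets and sums their sizes; everything else is bookkeeping.
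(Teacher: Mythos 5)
Your part~(a) plan (recursive attack on an $\alpha$-HST, $\alpha\to\infty$, charging across levels) is in the same spirit as the paper's, but you leave the central counting entirely to an unproved ``counting lemma,'' and the specific decomposition you propose does not match what actually works. You want each height $h$ to contribute $\binom{k}{h}$, corresponding to the recurrence $c_{k,d}=c_{k,d-1}+\binom{k}{d}$, with all $k$ servers participating in the recursion one level down. The paper instead uses the recurrence
\[
c_{k,d}=k+\sum_{i=0}^{k-1}c_{i,d-1},
\]
realized by an induction that decreases \emph{both} $d$ and the number of servers: the first request at a leaf drags all $k$ \DC servers to the root at upward cost $kW_{\alpha d}$; then, for $i=1,\dots,k-1$, the adversary recursively runs the depth-$(d-1)$ attack inside a fresh child subtree with only $i$ \DC servers inside, repeating it $\alpha-1$ times. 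Two invariants make this work and are the substance you are calling a ``counting lemma'': (1) a bookkeeping invariant at times $t_1<\dots<t_k$ that exactly $i$ online servers match offline servers while $k-i$ sit at the root and $k-i$ offline servers wait at designated leaves; (2) a ``pulling'' property of the recursive sequence — if \DC had an extra server just above the root of the subtree being attacked, that server would be pulled down by distance $W_{\alpha,d-1}$ — which is what lets the adversary peel off one root server per sub-recursion without it ever arriving. Your nested-subset picture does not obviously yield either invariant, and without them the $\binom{k}{h}$-per-level charging is unsubstantiated. As it stands, part~(a) is a plausible sketch but not a proof.

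Part~(b) has a genuine flaw. Your plan is to take the depth-$d$ HST instance, subdivide each edge into unit pieces with a large constant ratio $L$ between consecutive levels, and rerun the HST adversary. But this subdivision inflates the (combinatorial) depth of the resulting unit-weight tree to $\Theta(L^{d-1})$, not $d$. So what you get is a lower bound of $\Omega(k^d)$ on unweighted trees of some \emph{other} constant depth $d'=\Theta(L^{d-1})\gg d$; read at fixed depth $d'$, that is only $\Omega\bigl(k^{O(\log_L d')}\bigr)$, far short of $\Omega\bigl(k^{d'}\bigr)$. The theorem, however, is the statement that on unweighted trees of depth $d$ the ratio is $\Omega_d(k^d)$, and the paper proves the sharper bound
\[
4\sum_{h=1}^{d-1}\binom{k+h-2}{h}+2\binom{k+d-2}{d}+1
\]
directly on the $k$-ary tree of depth $d$ with all edges of length~$1$ — no HST embedding or edge subdivision at all. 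The argument is purely combinatorial: it introduces ``$j$-matches around $(x,y)$'' and $(h,\uparrow)$/$(h,\downarrow)$-situations, proves that a $j$-match around $(x,y)$ with $y$ at height $h$ can be resolved (one online server moved from $x$ to $y$) at \DC cost $2\binom{j+h}{h}-1$ and zero offline cost, and then chains together transformations that carry the unmatched pair all the way up the tree and back down. Observe that this bound exceeds $c_{kd}$ for $d\ge 2$ (indeed $\binom{k+d-2}{d}>\binom{k}{d}$), so the unweighted-tree lower bound genuinely cannot be obtained by ``losing only a constant'' relative to the HST bound; your subdivision route would give something strictly weaker even if the depth bookkeeping worked out. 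You should prove part~(b) directly on the uniform-length $k$-ary tree of depth $d$ rather than via the HST construction.
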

When the depth $d$ of the HST is at least $k$, the upper bound $c_{kd}= 2^k-1$ also matches exactly the lower bound from~\cite{CK19} that holds even for randomized algorithms against an adaptive online adversary. Note that for fixed $d$, $c_{kd}$ is roughly $k^d/d!$ up to a multiplicative error that tends to $1$ as $k\to\infty$. The $\Omega(k^d)$ lower bound on general trees is hiding a constant factor that depends on $d$. Since the root on general trees can be chosen arbitrarily, $d$ is essentially half the hop-diameter.

By well-known embedding techniques of general metrics into HSTs \cite{Bartal96,FakcharoenpholRT03}, slightly adapted to HSTs of bounded depth (see Theorem \ref{thm:embedding} in Section \ref{sec:pre}), we obtain the following result for general metrics.

\begin{cor}\label{cor:hst2}
	There is a randomized $O(k^c\Delta^{1/c}\log_{\Delta} n)$-competitive algorithm for the $k$-taxi problem for every $n$-point metric, where $\Delta$ is the aspect ratio of the metric, and $c\ge 1$ is an arbitrary positive integer. In particular, setting $c= \left\lceil\sqrt{\frac{\log \Delta}{\log k}}\right\rceil$, the competitiveness is $2^{O\big(\sqrt{\log k\log\Delta}\big)}\log_\Delta n$.
\end{cor}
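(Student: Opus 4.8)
The plan is to combine the upper bound of Theorem~\ref{thm:hst} for bounded-depth HSTs with the bounded-depth HST embedding of Theorem~\ref{thm:embedding} through the standard ``run the online algorithm on a random tree'' reduction, and then to optimize over the parameter $c$. Fix a positive integer $c$ and scale distances so that the smallest and largest distances in the given $n$-point metric $M$ are $1$ and $\Delta$. By Theorem~\ref{thm:embedding}, $M$ embeds into a random HST $T$ of depth $c$ that is dominating ($d_M(x,y)\le d_T(x,y)$ for all $x,y$) and has expected expansion $\mathbb{E}_T[d_T(x,y)]\le\alpha\cdot d_M(x,y)$, where $\alpha=O(\Delta^{1/c}\log_\Delta n)$. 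The randomized algorithm for $M$ samples $T$ in advance (obliviously to the request sequence), runs \DC on $T$, and keeps a shadow configuration in $M$ in which each server sits at the $M$-preimage of the leaf occupied by the corresponding \DC server; the shadow server moves in $M$ exactly when that \DC server moves unloaded between images of $M$-points to fetch a passenger, and the free passenger-carrying leg is replicated identically. By the triangle inequality and $d_M\le d_T$, each unloaded leg of the shadow algorithm is at most as long as the corresponding movement of \DC in $T$, so on any request sequence $\rho$ the shadow algorithm's (unloaded) cost in $M$ is at most that of \DC in $T$.

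For the competitive guarantee, condition on the realization of $T$. Theorem~\ref{thm:hst} bounds \DC's cost on $\rho$ in the metric $T$ by $c_{kc}$ times the offline optimum in $T$ (up to an additive constant), where $c_{kc}=\sum_{h=1}^{\min\{k,c\}}\binom kh\le k^c$. In the other direction, executing a fixed offline optimum for $\rho$ in $M$ also in $T$ and taking expectations (linearity together with the expansion bound) shows that the expected offline optimum in $T$ is at most $\alpha$ times the offline optimum in $M$; crucially, $\rho$ is oblivious to $T$. Chaining the three inequalities, the expected cost of the algorithm in $M$ is at most $c_{kc}\cdot\alpha=O\bigl(k^c\Delta^{1/c}\log_\Delta n\bigr)$ times the offline optimum in $M$, which is the first assertion.

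For the ``in particular'' part we optimize $c$, and may assume $\Delta\ge k$ (so that $\sqrt{\log\Delta/\log k}\ge1$; when $\Delta<k$ one simply takes $c=1$). With $c=\bigl\lceil\sqrt{\log\Delta/\log k}\,\bigr\rceil$ one has $c\le\sqrt{\log\Delta/\log k}+1$, hence $k^c\le k\cdot 2^{\sqrt{\log k\log\Delta}}\le 2^{2\sqrt{\log k\log\Delta}}$ (using $\log k\le\sqrt{\log k\log\Delta}$ since $k\le\Delta$) and $\Delta^{1/c}\le\Delta^{\sqrt{\log k/\log\Delta}}=2^{\sqrt{\log k\log\Delta}}$; thus $k^c\Delta^{1/c}=2^{O(\sqrt{\log k\log\Delta})}$ and the competitive ratio is $2^{O(\sqrt{\log k\log\Delta})}\log_\Delta n$.

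The substantive content is entirely in Theorem~\ref{thm:hst} and Theorem~\ref{thm:embedding}, which we may invoke; what remains is routine. The two points that need care are (i) checking that the shadowing reduction is legitimate for the $k$-taxi problem specifically --- that \DC's positions in $T$ always project to a valid configuration in $M$, and that the free passenger-carrying legs contribute nothing to either side of the cost comparison --- and (ii) handling the rounding in $\lceil\cdot\rceil$ and absorbing lower-order factors in the parameter optimization so that the bound comes out exactly as stated. Neither is a genuine obstacle.
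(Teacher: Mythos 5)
Your proof is correct and follows precisely the route the paper takes (which simply cites Theorem~\ref{thm:hst}\ref{it:ThmUbHst} together with the bounded-depth HST embedding of Theorem~\ref{thm:embedding} and leaves the routine parts implicit). One small inaccuracy worth flagging: Theorem~\ref{thm:embedding} gives distortion $O(c\,\Delta^{1/c}\log_\Delta n)$, not $O(\Delta^{1/c}\log_\Delta n)$ as you wrote; the extra factor of $c$ is harmless here because $c\cdot c_{kc}=O(k^c)$ (indeed $c_{kc}\le k^c/(c-1)!$ when $c\le k$, which is the extra $(c-1)!$ saving the paper mentions after the corollary), but the chain of inequalities as written has a gap until you say this.
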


Compared to the $O(2^k\log n)$ upper bound from \cite{CK19}, our bound has only a polynomial dependence on $k$ at the expense of some dependence on the aspect ratio. Since $c_{kd}\le 2^k-1$ for all $d$, we still also recover the same $O(2^k\log n)$ competitive factor. The bounds in Corollary~\ref{cor:hst2} actually hide another division by $(c-1)!$ if $c\le k$. Therefore, whenever $\Delta$ is at most $2^{O(k^2)}$ our bound yields an improvement.

\paragraph{Techniques.}
For the $k$-server problem, there exists a simple potential function analysis of \DC. The potential value depends on the relative distances of the server locations, which, in the $k$-taxi problem, can change arbitrarily by relocation requests even though the algorithm does not incur any cost. Therefore, such a potential cannot work for the $k$-taxi problem. In~\cite{CK19}, the $2^k-1$ upper bound for the randomized HST algorithm is proved via a potential function that is $2^k-1$ times the minimum matching between the online and offline servers. As is stated there, the same potential can be used to obtain the same bound for \DC when $k=2$, but it fails already when $k= 3$ (see Appendix~\ref{sec:purePotential}). Nonetheless, they conjectured that \DC achieves the competitive ratio of $2^k-1$ on HSTs.

We are able to prove that this is the case (and give the more refined bound of $c_{kd}$) with a primal-dual approach (which still uses an auxilary potential function as well). The primal solution is the output of the \DC algorithm. A dual solution is constructed to provide a lower bound on the optimal cost. The typical way a dual solution is constructed in the online primal-dual framework is forward in time, step-by-step, along with the decisions of the online algorithm (see e.g. \cite{BuchbinderN09,GuptaN14,AzarBCCCG0KNNP16}). By showing that the objective values of the constructed primal and dual solutions are within a factor $c$ of each other, one gets that the primal solution is $c$-competitive and the dual solution is $1/c$-competitive. For the LP formulation of the $k$-taxi problem we are considering, we show that a pure forward-time approach (producing a competitive dual solution as well) is doomed to fail:
\begin{theorem}\label{thm:forward}
	There exists no competitive online algorithm for the dual problem of the $k$-taxi LP as defined in Section~\ref{sec:LP}, even for $k=1$.
\end{theorem}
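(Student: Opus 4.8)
The plan is to exhibit a single metric space and a family of request sequences that forces any online algorithm for the dual problem into an unbounded loss, essentially because the value of a good dual solution at a given time depends on requests that have not yet arrived. Since the statement allows $k=1$, I would work with one server on a simple metric—say a star with a central point $r$ and many leaves, or even just the line—so that the $k$-taxi LP and its dual become very transparent. First I would recall the LP from Section~\ref{sec:LP} specialized to $k=1$: the primal has a variable tracking the server's movement cost, and the dual has variables, one per request (and per time step), whose feasible combinations are constrained by the triangle inequality / the structure of consecutive requests, with the dual objective being a weighted sum that lower-bounds $\mathrm{OPT}$. The key structural fact to extract is that for a prefix of the request sequence, the optimal dual value can be small, but after one more (adversarially chosen) request it must jump; an online dual algorithm, having already committed its variables on the prefix, cannot retroactively raise them.

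The core of the argument is an adversary strategy. I would have the adversary first issue a block of requests on which \emph{any} online dual assignment is forced, by feasibility, to keep the relevant dual variables low (because raising them now would violate a constraint that only the future request will ``release''). Then the adversary reveals the continuation that makes the true optimal dual value large, and the online solution—locked into its earlier choices—is off by a factor that grows with the length of the block. The cleanest way to organize this is a potential/accounting argument: define the ``deficit'' of the online dual solution as the gap between its current objective and the best dual value consistent with the requests so far, and show the adversary can force this deficit to grow without bound (e.g. linearly in the number of rounds) while $\mathrm{OPT}$ stays bounded, or grows only proportionally. Concretely, one can alternate the request between two far-apart configurations so that each ``reversal'' would have justified a large dual contribution in hindsight but could not have been anticipated; summing over reversals gives the unbounded ratio.

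The main obstacle I anticipate is pinning down exactly which dual constraint is the ``binding'' one that prevents the online algorithm from hedging—i.e.\ showing that there is no clever way for the online dual player to assign fractional values that are simultaneously feasible for all prefixes and competitive. This requires a careful look at the precise dual LP from Section~\ref{sec:LP}: I need to identify a constraint coupling a dual variable at time $t$ to the request at some time $t' > t$, so that committing at time $t$ without knowing the time-$t'$ request is provably costly. Once that coupling constraint is isolated, the adversary construction and the accounting should be routine. A secondary, more bookkeeping-level issue is handling the LP's continuous/temporal structure (movements happen over time intervals, not discrete steps) so that the lower bound on $\mathrm{OPT}$ provided by the dual is genuinely matched offline; I would address this by choosing the metric distances and request timings so that the offline optimum, the offline optimal dual, and the online dual are all easy to compute exactly, leaving no slack to obscure the separation.
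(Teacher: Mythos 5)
Your high-level intuition is exactly right: the obstruction is that an online dual player must commit its variables before seeing the next request, and an adversary can look at those committed values and choose the next request to deny any dual gain. However, your proposal stops short of the actual construction, and you yourself flag the missing piece (``pinning down exactly which dual constraint is the binding one'') as the main obstacle. That gap is real: without identifying the constraint and the concrete adversary, the argument is not yet a proof.

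Two more specific issues. First, your description of the dual is off: the dual of Section~\ref{sec:LP} (in its transformed form) has one variable $A_{ut}$ per vertex and per time step (an ``altitude''), and its constraints are \emph{monotonicity in time} ($\Delta_t A_u\geq 0$) and a bounded-slope condition between parent and child; there are no triangle-inequality-type constraints ``one per request.'' Second, the coupling you were looking for is not a constraint linking a time-$t$ variable to a later request, but rather the shape of the per-step objective $D_t = \Delta_t A_{s_t} - \sum_i \Delta_t A_i$: once $A_{\cdot,t-1}$ is fixed, the adversary gets to pick $s_t$, and the monotonicity constraint prevents the algorithm from lowering any altitude afterward to compensate. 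The paper's proof is in fact much simpler than the potential/deficit bookkeeping you anticipate: take a three-vertex tree (two leaves $u,w$ at distance $1$ from the root), $k=1$. At each round the adversary looks at which leaf currently has the smaller altitude, say $u$; if the server is already at $u$ it issues a simple request at $w$, giving $D_t = A_{u,t-1}-A_{w,t-1}\leq 0$; if the server is at $w$ it first relocates it to $u$ (which by monotonicity gives $D_t\leq 0$) and then requests $w$, and the two-step total is again $\leq 0$. Thus the online dual objective never increases, while $\mathrm{OPT}$ grows by $\Theta(1)$ per simple request. Your sketch would need to be filled in to this level of concreteness to constitute a proof; as written it is an outline of a plausible attack rather than a completed argument.
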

Our main conceptual contribution is a novel way to overcome this problem by constructing the dual solution backwards in time.
Our assignment of dual variables for time $t$ combines knowledge about the future \emph{and} the past: It incorporates knowledge about the future simply due to the time-reversal; knowledge about the past is also used because the dual assignments are guided by the movement of \DC, which is a forward-time (online) algorithm. Our method can be seen as a restricted form of dual fitting suitable for online algorithms that is more ``local" and hence easier to be analyzed step by step, similarly to primal-dual algorithms.
We believe that this time-reversed method of constructing a dual solution may be useful for analyzing additional online problems, especially when information about the future helps to construct better dual solutions. Using this technique, we also provide a primal-dual proof of the $k$-competitiveness of \DC for the $k$-server problem on trees. To the best of our knowledge, a primal-dual proof of this classical result has not been known before.
\begin{theorem}[\cite{ChrobakL91}]\label{thm:kserver}
	\DC is $k$-competitive for the $k$-server problem on trees.
\end{theorem}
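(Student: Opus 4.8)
\emph{Proof plan.} The idea is to reuse the time-reversed dual-fitting framework of this paper, specialized to the case in which every request has equal start and destination, so that \DC is simply the classical algorithm. First I would specialize the $k$-taxi LP of Section~\ref{sec:LP} to $k$-server on a tree rooted at an arbitrary vertex: for each edge $e$ and each time $t$ there are variables encoding the flow of servers across $e$ at step $t$ (split into the two directions so that the objective remains linear), the quantity $n_e(t)$ counting the servers in the subtree below $e$ is the running sum of these flows, the objective is $\sum_{e,t} w_e\cdot(\text{total flow across }e\text{ at time }t)$, and the only ``hard'' constraints express that there is a server at $r_t$ (equivalently $n_e(t)\ge 1$ for the edges $e$ on the root-to-$r_t$ path), together with the natural capacity constraints $0\le n_e(t)\le k$ and $n_e(t)\le k-1$ whenever $r_t$ lies outside the subtree below $e$. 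Running \DC produces a feasible primal solution of cost exactly $\cost(\DC)$, and since the LP is a relaxation of $k$-server, weak duality says that every feasible \emph{dual} solution has value at most $\mathrm{OPT}$. Hence it suffices to construct a feasible dual solution of value at least $\tfrac1k\,\cost(\DC)$.

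The dual is built \emph{backwards} in time, from the last request down to the first, maintaining a feasible dual solution for the suffix processed so far together with an auxiliary potential $\Phi$ — as in the $k$-taxi analysis, a local function of \DC's configuration at the current time and of the dual mass already committed. When request $r_t$ is prepended, I would inspect the move \DC makes at step $t$: it slides the currently unobstructed servers along the edges toward $r_t$, and for each such edge I raise the dual variable(s) attached to that edge at time $t$ by an amount governed by how far \DC moves there, but \emph{capped} using the dual mass already placed on that edge at times larger than $t$ and using the change of $\Phi$. This is exactly the point where information about the future (the suffix dual already built) and information about the past (the state of \DC at step $t$, which records the history) are combined to fix a single step's dual assignment.

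Two statements must then be proved inductively along the backward sweep. First, feasibility is preserved: for every edge $e$, the total dual load placed on $e$ over all times stays within the budget allotted to $e$. Second, a local inequality of the form
\[
\cost_t(\DC)\;\le\; k\cdot\big(\text{increase of the dual objective at step }t\big)\;+\;\Phi_{t+1}-\Phi_t
\]
holds, where the factor $k$ enters because at step $t$ at most $k$ unobstructed servers move at once, each contributing to $\cost_t(\DC)$ while the dual earns credit essentially only once. Summing this inequality over all $t$ telescopes $\Phi$ (the empty suffix has $\Phi=0$ and $\Phi\ge 0$ throughout), giving $\cost(\DC)\le k\cdot(\text{dual value})\le k\cdot\mathrm{OPT}$.

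I expect dual feasibility to be the main obstacle. Because the dual load on an edge $e$ accumulates over many steps and each increment is chosen using both future and past information, one must argue that the pattern of \DC's movement on the subtree below $e$ — roughly, that between two consecutive times \DC serves a request out of that subtree the edge can be charged only in a controlled way — keeps the running total under budget, and this is precisely what the potential $\Phi$ is designed to certify. The local inequality, while it also requires careful bookkeeping for the split flow variables and for the complementary $n_e(t)\le k-1$ constraints, should follow from the same case distinction between obstructed and unobstructed servers that underlies the classical potential-function analysis of \DC on trees.
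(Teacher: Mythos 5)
Your high-level plan — specialize the LP of Section~\ref{sec:LP} to $T_r=\emptyset$, build a dual solution backwards in time guided by \DC's moves, and couple it with a potential $\Phi$ to absorb the factor $k$ — is indeed the strategy of the paper. But the proposal stops short of the content that makes the argument go through, and in a few places it steers toward the wrong difficulty.

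First, the actual dual construction in the paper is not a ``cap the increment by the dual mass already committed'' rule. It is much more rigid: in a small step, either \emph{all} altitudes are frozen, or the altitudes of the entire connected component of $s_t$ (obtained by cutting the edges the servers traverse in that step) are lowered by exactly $1$. Rules~(i)--(ii) of Section~\ref{sec:pd} — altitudes only decrease in reverse time, and $A_u-A_{p(u)}\in\{0,1\}$ — make feasibility essentially automatic. Consequently, feasibility is \emph{not} the main obstacle, contrary to what you claim at the end; it falls out of the update rule, and the potential has nothing to do with certifying it. This is an important structural point you have inverted: in the paper, Lemma~\ref{lem:backw} is proved without any reference to $\Psi$ and yields the clean guarantee $\Delta D\ge 1$ when $B=\emptyset$ and $\Delta D\ge 0$ when $B=\{j\}$, and only \emph{then} does the potential enter via the separate inequalities $|U|+\Delta\Psi\le k$ and $|U|+\Delta\Psi\le 0$.

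Second, the proposal never names the potential, and that is where all the $k$-server-specific work lives. The paper uses
\[
\Psi \;=\; -\sum_{i<j} d_{\lca(i,j)},
\]
the negated sum over server pairs of the weighted depth of their least common ancestor. The two bounds $|U|+\Delta\Psi\le k$ (case $B=\emptyset$: each of the $k-|U|$ servers sitting below some $u\in U$ contributes $+1$) and $|U|+\Delta\Psi\le 0$ (case $B=\{j\}$: pairing each $i\in U$ with $j$ contributes $-1$ and everything else cancels) are a genuine calculation, not a restatement of ``at most $k$ servers move.'' Without this $\Psi$ and these two inequalities you do not have a proof.

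Third, your LP is not the paper's LP. You add explicit capacity constraints $0\le n_e(t)\le k$ and $n_e(t)\le k-1$ for edges outside the request path. The paper's primal has no such upper-bound constraints: the cascading constraints $x_{ut}\ge \1_{\{u=s_t\}}+\sum_{v\prec u}x_{vt}$ together with the \emph{constant} $x_{\rt t}=k$ already encode them implicitly. Adding your capacity constraints changes the dual: you would get extra dual variables whose presence destroys the clean ``altitude'' reformulation (constraints~\eqref{eq:slope}--\eqref{eq:AltGrow}) that the backward construction relies on. You would need to re-derive a dual-transformation lemma for your LP, and it is not clear it yields anything as pleasant. Moreover, the paper's $k$-server analysis crucially uses the one-sided version~\eqref{eq:slopeMon} of the slope constraint (measuring only upward movement), a point your sketch does not touch.

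Finally, a small sign/indexing issue: you write $\cost_t\le k\cdot\Delta D_t+\Phi_{t+1}-\Phi_t$, whereas the usable telescoping inequality is $\cost_t+\Psi_t-\Psi_{t-1}\le k\cdot D_t$ (equation~\eqref{eq:pot}); since the $k$-server potential $\Psi$ is nonpositive and bounded, the telescoped error term is a constant, but your phrase ``$\Phi\ge 0$ throughout'' does not hold for the actual potential.

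In summary: the framework you outline is correct and matches the paper's, but the proposal is missing the two central ingredients — the explicit backward dual update and the explicit potential with its two case inequalities — and it modifies the LP in a way that would invalidate the altitude reformulation the whole analysis rests on.
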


\subsection{Organization}
Section~\ref{sec:pre} defines necessary notation and terminology. In Section~\ref{sec:LP}, we provide a linear programming formulation of the $k$-taxi problem on trees and the corresponding dual, which we then transform to a more intuitive equivalent dual LP. In Section~\ref{sec:hst} we prove the upper bound for $k$-taxi on HSTs (Theorem~\ref{thm:hst}\ref{it:ThmUbHst}) and, as a byproduct, for $k$-server on general trees (Theorem~\ref{thm:kserver}). The upper bound for $k$-taxi on general trees  (Theorem~\ref{thm:hst}\ref{it:ThmUbTree}) is given in Section~\ref{sec:generaltree}. The lower bounds (Theorem~\ref{thm:lowerbound}) are proved in Section~\ref{sec:Lb}. Corollary~\ref{cor:hst2} follows directly from Theorem~\ref{thm:hst}\ref{it:ThmUbHst} and Theorem~\ref{thm:embedding} in Section~\ref{sec:pre}.
Appendix \ref{sec:noForward} contains a proof of the limitation of forward-time setting of dual variables (Theorem~\ref{thm:forward}).

\section{Preliminaries}\label{sec:pre}
\paragraph{The $k$-Taxi Problem.} The $k$-taxi problem is formally defined as follows. We are given a metric space with point set $V$, where $|V|=n$. Initially, $k$ taxis, or servers, are located at points of $V$. At each time $t$ we get a request $(s_t,d_t)$, where $s_t, d_t \in V$. The request is served by moving one of the servers to $s_t$ (unless there is already a server at $s_t$). The cost paid by the algorithm is the distance traveled by the server to $s_t$. Then, one of the servers from $s_t$ is relocated to the point $d_t$. There is no cost for relocating the server from $s_t$ to $d_t$. The goal is to minimize the cost.

Without loss of generality, we can split each request $(s_t,d_t)$ into two requests: a {\em simple} request and a {\em relocation} request. Thus, at each time $t$, request $(s_t,d_t)$ is either one of the following:
\begin{itemize}
\item {\em Simple request ($s_t=d_t$)}: a server needs to move to $s_t$, if there is no server there already. The cost is the distance traveled by the server.
\item {\em Relocation request ($s_t=d_{t-1}$)}: a server is relocated from $s_t$ to $d_t$. There is no relocation cost.
\end{itemize}
We can then partition the time horizon into two sets $T_s$, $T_r$ (odd and even times). For times in $T_s=\{1,3,5, \ldots, 2T-1\}$ we have simple requests, and for times in $T_r=\{2,4,6, \ldots, 2T\}$ we have relocation requests. The $k$-server problem is the special case of  the $k$-taxi problem without relocation requests.

\paragraph{Trees and HSTs.} Consider now a tree $\TT=(V,E)$ and let $\rt$ denote its root.
There is a positive weight function defined over the edge set $E$, and without loss of generality  all edge weights are integral.
The {\em distance} between vertices $u$ and $v$ is the sum
 of the weights of the edges on the (unique) path between them in ${\TT}$, which induces a metric.
 The {\em combinatorial depth} of a vertex $v \in V$ is defined to be the number of edges on the path from $\rt$ to $v$.
 The combinatorial depth of ${\TT}$ is the maximum combinatorial depth among all vertices.
At times it will be convenient to assume  that all edges in $E$ have unit length by breaking edges into unit length parts called {\em short edges}. We then refer to the original edges of $\TT$ as {\em long edges}. However, the combinatorial depth of ${\TT}$ is still defined in terms of the long edges. We define the {\em weighted depth} of a vertex $u$ as the number of \emph{short} edges on the path from $\rt$ to $u$. For $u \in V$, let $V_u\subseteq V$ be the vertices of the subtree rooted at $u$. In trees where all the leaves are at the same weighted/combinatorial depth (namely, HSTs, see below), we define the {\em weighted/combinatorial height} of a vertex $u$ as the number of short/long edges on the path from $u$ to any leaf in $V_u$. We denote by $v\prec u$ that $v$ is a child of $u$. We write $p(u)$ for the parent of $u$.

{\em Hierarchically well-separated trees (HSTs)}, introduced by Bartal
\cite{Bartal96}, are special trees that can be used to approximate arbitrary finite metrics. For $\alpha\ge 1$, an $\alpha$-HST is a tree where every leaf is at the same combinatorial depth $d$ and the edge weights along any root-to-leaf path decrease by a factor $\alpha$ in each step. The associated metric space of the HST is only the set of its leaves. Hence, for the $k$-taxi problem on HSTs, the requested points $s_t$ and $d_t$ are always leaves. Any $n$-point metric space can be embedded into a random $\alpha$-HST such that (i) the distance between any two points can only be larger in the HST and (ii) the expected blow-up of each distance is $O(\alpha\log_\alpha n)$ \cite{FakcharoenpholRT03}. The latter quantity is also called the \emph{distortion} of the embedding. The depth of the random HST constructed in the embedding is at most $\lceil\log_\alpha \Delta\rceil$, where $\Delta$ is the aspect ratio, i.e., the ratio between the longest and shortest non-zero distance. Choosing $\alpha=\Delta^{1/d}$, we obtain an HST of depth $d$ and with distortion $O\left(d\Delta^{1/d}\log_\Delta n\right)$.

\begin{theorem}[Corollary to \cite{FakcharoenpholRT03}]\label{thm:embedding}
Any metric with $n$ points and aspect ratio $\Delta$ can be embedded into a random HST of combinatorial depth $d$ with distortion $O\left(d\Delta^{1/d}\log_\Delta n\right)$.
\end{theorem}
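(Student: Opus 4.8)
The plan is to derive this directly from the FRT embedding of Fakcharoenphol, Rao and Talwar~\cite{FakcharoenpholRT03}, specialized to HSTs with a carefully chosen separation parameter. First I would recall the standard construction: after scaling so that the minimum nonzero distance is $1$ and the diameter is $\Delta$, FRT picks a uniformly random permutation of the points together with a random radius, and uses these to build, level by level, a laminar family of clusters whose diameters shrink by a fixed factor $\alpha$ from one level to the next. Declaring the cluster at level $i$ to sit at combinatorial depth $L-i$ below the root (so level $L$ is the root and level $0$ gives the singletons, i.e.\ the leaves), and assigning each edge between consecutive levels a weight proportional to the diameter at that level, yields an $\alpha$-HST whose root-to-leaf edge weights decrease by the factor $\alpha$ at every step. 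The FRT analysis guarantees that distances are never contracted and that each distance is stretched by an expected factor $O(\alpha\log_\alpha n)$; I would quote this rather than reprove it.

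The second step is the choice of parameter. The number of levels in the construction is $L=\lceil\log_\alpha\Delta\rceil$, so taking $\alpha=\Delta^{1/d}$ makes $\log_\alpha\Delta=d$, hence $L=\lceil d\rceil=d$, and the resulting HST has combinatorial depth exactly $d$. Note $\alpha=\Delta^{1/d}\ge 1$ since $\Delta\ge 1$, so this is a legitimate HST parameter. Substituting into the FRT distortion bound gives
\[
O(\alpha\log_\alpha n)=O\!\left(\Delta^{1/d}\cdot\frac{\log n}{(1/d)\log\Delta}\right)=O\!\left(d\,\Delta^{1/d}\log_\Delta n\right),
\]
which is precisely the claimed bound.

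The one point requiring care is the convention (used throughout the paper) that all leaves of the HST lie at the same combinatorial depth $d$: in the FRT construction a point may end up alone in a singleton cluster already at some level $i>0$, so its root-to-leaf path would be too short. I would fix this by continuing the hierarchical refinement formally down to level $0$, letting every singleton cluster remain a singleton at all lower levels; this attaches to each prematurely isolated point a chain of further edges following the same geometric weight schedule. Since a singleton cluster contains exactly one point, this introduces no new distances between distinct points — for any two distinct points the distance realized in the tree is still governed by the highest level at which their clusters coincide, exactly as in the original analysis — so both the non-contraction property and the $O(\alpha\log_\alpha n)$ expected-stretch bound carry over verbatim, and the tree is a genuine depth-$d$ $\alpha$-HST. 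The only (minor) obstacles are this bookkeeping with singleton branches and confirming the rounding $L=\lceil\log_\alpha\Delta\rceil=d$; neither is substantial, which is why the statement is phrased as a corollary to~\cite{FakcharoenpholRT03}.
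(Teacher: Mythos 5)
Your argument is correct and is essentially the same as the paper's: Section~2 justifies the claim by taking the FRT embedding into a random $\alpha$-HST with distortion $O(\alpha\log_\alpha n)$ and depth at most $\lceil\log_\alpha\Delta\rceil$, then setting $\alpha=\Delta^{1/d}$ to obtain depth $d$ and distortion $O\bigl(d\,\Delta^{1/d}\log_\Delta n\bigr)$. The paper does not spell out the padding of prematurely isolated points with chains of singleton clusters down to depth $d$, but your treatment of that point is a correct and harmless addendum rather than a departure from the approach.
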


\paragraph{The Double Coverage Algorithm.}
We define \DC in a way that will suit our definition of short edges of length $1$ later. Consider the arrival of a simple request at location $s_t$. A server located at vertex $v$ is {\em unobstructed} if there are no other servers on the path between $v$ and $s_t$. If other servers on this path exist only at $v$, we consider only one of them unobstructed (chosen arbitrarily). Serving the request is done in several small steps, as follows:\\

\vspace{0.15cm}
\noindent \framebox[1.05\width][l]{
\begin{minipage}{0.94\linewidth}
\DC (upon a simple request at $s_t$):\\
While no server is at $s_t$: all currently unobstructed servers move distance $1$ towards $s_t$.
\end{minipage}}
\vspace{0.15cm}
\\

\noindent Upon a relocation request $(s_t,d_t)$, we simply relocate a server from $s_t$ to $d_t$.

For a given small step (i.e., iteration of the while-loop), we denote by $U$ and $B$ the sets of servers moving towards the root (\textbf{u}pwards in the tree) and away from the root (towards the \textbf{b}ottom of the tree), respectively.
\begin{obs}
	In any small step:
	\begin{itemize}
		\item $B$ is either a singleton ($B=\{j\}$ for a server $j$) or empty ($B=\emptyset$).
		\item The subtrees rooted at servers of $U$ are disjoint and do not contain $s_t$. If $B=\{j\}$, then these subtrees and $s_t$ are inside the subtree rooted at $j$.
	\end{itemize}
\end{obs}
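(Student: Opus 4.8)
The plan is to deduce both claims purely from the structure of the tree and the definition of an \emph{unobstructed} server, using one recurring principle: if a server $i$ is unobstructed during a small step, then no server sits at a vertex that lies strictly between $i$'s location and $s_t$ on the $i$--$s_t$ path, and (by the tie-breaking rule in the definition of \DC) no second server sits at $i$'s own location. Throughout a small step we are inside the while-loop, so no server is located at $s_t$; hence every server at a vertex $v$ has a well-defined direction towards $s_t$, and one checks directly that it moves \emph{down} (towards the bottom of the tree) iff $s_t\in V_v$, and \emph{up} (towards the root) iff $s_t\notin V_v$. Since all unobstructed servers move and none is at $s_t$, the set of unobstructed servers is exactly $U\cup B$, partitioned by this direction.

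First I would prove $|B|\le 1$. Suppose $j_1,j_2\in B$. Then $s_t\in V_{j_1}\cap V_{j_2}$ with $j_1,j_2$ located at vertices distinct from $s_t$, so both of these vertices are strict ancestors of $s_t$; since the ancestors of $s_t$ form a chain, the two vertices are comparable or equal. If they are equal, the tie-breaking rule leaves only one of $j_1,j_2$ unobstructed, a contradiction. If, say, $j_1$'s vertex is a strict ancestor of $j_2$'s vertex, then $j_2$ lies strictly between $j_1$ and $s_t$ on their connecting path, so $j_1$ is obstructed, again a contradiction. Hence at most one server moves down.

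For the second claim, note that $i\in U$ means $s_t\notin V_i$, which is exactly the statement that the subtrees rooted at servers of $U$ do not contain $s_t$. For disjointness, suppose $i_1,i_2\in U$ and $V_{i_1}\cap V_{i_2}\neq\emptyset$; then their vertices are comparable or equal, equality being excluded by the tie-breaking rule. If $i_1$'s vertex is a strict ancestor of $i_2$'s vertex, then since $s_t\notin V_{i_1}$ the path from $i_2$ to $s_t$ must leave the subtree $V_{i_1}$ and therefore passes through $i_1$, strictly between $i_2$ and $s_t$; so $i_2$ is obstructed, a contradiction. Finally, if $B=\{j\}$ then $s_t\in V_j$ by the direction characterization, and for any $i\in U$, if $i\notin V_j$ then the path from $i$ to $s_t$ enters $V_j$ and hence crosses $j$, strictly between $i$ and $s_t$, obstructing $i$ --- a contradiction. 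So $i\in V_j$, i.e.\ $V_i\subseteq V_j$, and together with $s_t\in V_j$ this gives the last part of the observation.

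Each step reduces to the single ``crossing forces obstruction'' principle together with the fact that two intersecting subtrees of a tree are nested; there is no real obstacle here. The only point requiring care is the degenerate case of two unobstructed servers sharing a vertex, which is ruled out not by the crossing principle but by the arbitrary-choice tie-breaking rule built into the definition of \DC.
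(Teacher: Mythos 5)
Your proof is correct. The paper states this observation without a proof, treating it as immediate from the definition of unobstructed servers on a tree; your argument makes it precise in essentially the canonical way, via the characterization that an unobstructed server at $v$ moves down iff $s_t\in V_v$, the chain property of ancestors of $s_t$, the nesting of intersecting subtrees, and the ``crossing forces obstruction'' principle, with the tie-breaking rule handling the shared-vertex degeneracy.
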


\section{LP formulations}\label{sec:LP}

We formulate a linear program (LP) for the $k$-taxi problem on trees along with its dual that we use for the purpose of analysis. We assume for ease of exposition that all edges are short edges. The formulation is a relaxation of the problem as it allows for fractional values of the variables\footnote{In general, for any metric, the offline $k$-taxi problem can be solved in polynomial time by a reduction to a min-cost flow problem.}.
For $u \in V$, let variable $x_{ut}$ denote the number of servers in $V_u$ after the request at time $t$ has been served. Variable $y_{ut}\geq 0$ denotes the number of servers that left subtree $V_u$ (moving upwards) at time $t$. The variable $z_{ut}\ge 0$ denotes the number of servers that enter $V_u$ (moving downwards) at time $t$.
For $u=\rt$, $x_{\rt t}$ is defined to be the \emph{constant} $k$. (It is not a variable.) The dual variables corresponding to the primal constraints appear in parenthesis  left of the constraints.
The primal LP is the following:
\begin{alignat*}{3}
&&\text{min }~~~   & \sum_{t\in T_s} \sum_{u\ne \rt} (y_{ut}+z_{ut})\\
&\dl{(\lambda_{ut})}\qquad& & x_{ut}\ge \1_{\{u=s_t\text{ and }t\in T_s\}} + \sum_{v\prec u} x_{vt} \quad&& \forall u\in V, t\in T_s \cup T_r \\
&\dl{(b_{u,t-1})}&& y_{ut}-z_{ut} = x_{u,t-1}-x_{ut} \quad&&  \forall u\ne \rt, t \in T_s\\
&\dl{(b_{u,t-1})}&&x_{ut} = x_{u,t-1}+ \xi_{ut} \quad&&  \forall u\ne \rt, t \in T_r\\
&&& y_{ut}, z_{ut}\ge 0 && \forall u\ne \rt, t \in T_s,
\end{alignat*}
where
\[\xi_{ut} := \left\{\begin{array}{ll}
-1 & \text{if } s_t \in V_u, d_t \notin V_u\\
1 & \text{if } s_t \notin V_u, d_t \in V_u\\
0 & \mbox{otherwise.}
\end{array} \right.\]
For technical reasons, before we construct the dual LP we will add the additional constraint
\begin{align*}
\dl{(b_{u,2T})}\qquad\qquad 0= x_{u,2T} - \bar{x}_{u,2T} \qquad \forall u\ne\rt
\end{align*}
to the primal LP, where $\bar{x}_{u,2T}$ are \emph{constants} specifying the configuration of \DC at the last time step. Clearly, this affects the optimal value by only an additive constant. We will also view $x_{u0}=\bar{x}_{u0}$ as constants describing the initial configuration of the servers. The corresponding dual LP is the following.
\begin{align}
 \lefteqn{\hspace*{-1cm}\max~~~ \sum_{t\in T_s} \left(\lambda_{s_tt}-k\lambda_{\rt t}\right) + \sum_{t\in T_r} \left(-k\lambda_{\rt t}+ \sum_{u\ne \rt}\xi_{ut}b_{u,t-1}\right) ~+~ \sum_{u\ne \rt} \bar{x}_{u0} b_{u0} - \sum_{u\ne \rt} \bar{x}_{u,2T} b_{u,2T}} \nonumber \\ \nonumber\\
&\dl{(x_{ut})} && \lambda_{ut}-\lambda_{p(u)t} = b_{ut}-b_{u,t-1} & \forall u\ne \rt, t \in T_s \cup T_r \label{dual1Both} \\
&\dl{(z_{ut}, y_{ut})}&& b_{u,t-1} \in[-1, 1] &  \forall u\ne \rt, t \in T_s \label{dual2Both} \\
&&& \lambda_{ut}\ge 0\quad&  \forall u\in V, t\in T_s\cup T_r \label{dual3Both}
\end{align}
We can use the same primal and dual formulation for the $k$-server problem, except that the set $T_r$ is then empty.\footnote{We note that our LP for the $k$-server problem is different from LPs used in the context of polylogarithmically-competitive randomized algorithms for the $k$-server problem. In our context of deterministic algorithms for $k$-taxi (and $k$-server), we show that we can work with this simpler formulation.}

As already mentioned, when considering the $k$-taxi problem on HSTs, requests appear only at the leaves. It is easy to see that in this case the upward movement cost (i.e., movement towards the root) is the same as the downward movement cost, up to an additive error of $k$ times the distance from the root to any leaf. The same is true of the $k$-server problem on general trees (but not for the $k$-taxi problem on general trees).
Hence, for the $k$-taxi problem on HSTs and for the $k$-server problem, we can use an LP that only takes into account the upward movement cost, and thus the coefficient of the variables $z_{ut}$ in the primal objective function become zero. The only change to the dual linear program as a result of this is that constraint \eqref{dual2Both} becomes:
\begin{equation}b_{u,t-1} \in[0, 1] \qquad  \forall u\ne \rt, t \in T_s \qquad \mbox{when measuring only upward cost}\label{dual2} \end{equation}

\subsection{Dual transformation}

The dual LP is not very intuitive. By a transformation of variables, we get a simpler equivalent dual LP, which we can interpret as building a mountain structure on the tree over time. This new dual LP has only one variable $A_{ut}$ for each vertex $u$ and time $t$. We interptet  $A_{ut}$ as the \emph{altitude} of $u$ at time $t$. We denote by $\Delta_tA_{u}= A_{ut}- A_{u,t-1}$ the change of altitude of vertex $u$ at time $t$. For a server $i$ of \DC, we denote by $v_{it}$ its location at time $t$, and define similarly $\Delta_tA_i:=A_{v_{it}t}-A_{v_{i,t-1}t-1}$ as the change of altitude of server $i$ at time $t$. The new dual LP is the following:

\begin{align}
\lefteqn{\hspace*{-1cm}\max~~~ \sum_{t\in T_s}\left[\Delta_tA_{s_t}-\sum_{i=1}^{k} \Delta_t A_{i}\right] - \sum_{t\in T_r} \sum_{i=1}^{k} \Delta_tA_{v_{it}}} \nonumber \\ \nonumber\\
	& A_{ut}-A_{p(u)t} \in[-1, 1] &  \forall u\ne \rt, t+1 \in T_s \label{eq:slope}\\
	& \Delta_tA_{u}\ge 0\quad&  \forall u\in V, t\in T_s\cup T_r\label{eq:AltGrow}
\end{align}

The constraints of the LP stipulate that the altitude of each node can only increase over time, and the absolute difference in altitude of two adjacent nodes is at most $1$ (at time steps before a simple request).
The objective function measures changes in the altitudes of request and server locations.
When measuring only movement towards the root, constraint \eqref{eq:slope} becomes:
\begin{align} A_{ut}-A_{p(u)t} \in[0, 1] &  \qquad \forall u\ne \rt, t+1 \in T_s \label{eq:slopeMon}\end{align}
This corresponds to the additional requirement that altitudes are non-decreasing along root-to-leaf paths.

We define
\begin{align}
D_t:= &  \begin{cases}\Delta_tA_{s_t}-\sum_{i=1}^{k} \Delta_t A_{i}\qquad& t\in T_s\\
-\sum_{i=1}^{k} \Delta_tA_{v_{it}} & t\in T_r,
\end{cases}\label{eq:Dt}
\end{align}
so that the dual objective function is equal to $D:=\sum_{t \in T_s \cup T_r} D_t$.

The following lemma allows us to use this new LP for our analyses.
\begin{lemma}
The two dual LPs are equivalent. That is, any feasible solution to one of them can be translated (online) to a feasible solution to the other with the same objective function value.
\end{lemma}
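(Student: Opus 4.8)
The claim is that the original dual LP (with variables $\lambda_{ut}$, $b_{ut}$) and the transformed dual LP (with altitude variables $A_{ut}$) are equivalent, with an online translation preserving the objective. The natural strategy is to exhibit an explicit change of variables in each direction and then check (i) that feasibility is preserved and (ii) that the objective values agree. The guiding intuition: in the original dual, constraint \eqref{dual1Both} is a telescoping/discrete-gradient identity relating the difference $\lambda_{ut}-\lambda_{p(u)t}$ to the time-difference $b_{ut}-b_{u,t-1}$. This strongly suggests that $\lambda_{ut}$ should be a spatial gradient (along edges) and $b_{ut}$ a temporal gradient of a single scalar field — namely the altitude $A_{ut}$. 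Concretely, I would guess the correspondence $\lambda_{ut} = A_{p(u)t}-A_{ut}$ (or its negation, depending on sign conventions — this is the bookkeeping one must get right) and $b_{ut}=A_{ut}-A_{u,t-1}$ up to a choice of additive normalization, so that \eqref{dual1Both} becomes an identity and the box constraint \eqref{dual2Both} on $b$ becomes the altitude-growth constraint \eqref{eq:AltGrow}, while the nonnegativity \eqref{dual3Both} of $\lambda$ becomes the slope constraint \eqref{eq:slope}.

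**Steps in order.** First I would fix the direction ``altitudes $\to$ original dual'': given a feasible $(A_{ut})$, define $b_{u,t-1}:=\Delta_tA_u=A_{ut}-A_{u,t-1}$ for $u\ne\rt$, and define $\lambda_{ut}$ by summing altitude-slopes along the root-to-$u$ path, i.e. $\lambda_{ut}:=\sum_{\rt\prec\cdots\prec u}(A_{p(w)t}-A_{wt})$ (choosing signs so that $\lambda_{\rt t}=0$ and \eqref{dual3Both} matches \eqref{eq:slope}/\eqref{eq:slopeMon}). Then verify: constraint \eqref{dual1Both} is immediate by telescoping both the spatial sum defining $\lambda$ and the temporal difference defining $b$; constraint \eqref{dual2Both} (resp. \eqref{dual2}) is exactly \eqref{eq:AltGrow} together with the sign of $\Delta_tA_u$; and \eqref{dual3Both} follows from \eqref{eq:slope} (resp. \eqref{eq:slopeMon}) since a sum of terms in $[-1,1]$ (resp. $[0,1]$) along a path is what $\lambda_{ut}$ is — though here one must be slightly careful, since nonnegativity of a path-sum of $[-1,1]$ terms is not automatic. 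This is where I expect to need the monotone version \eqref{eq:slopeMon} in the upward-only setting, and in the two-sided setting to observe that $\lambda_{ut}\ge 0$ may need to be arranged by adding a suitable constant to all altitudes in a subtree, or that the claimed correspondence actually routes nonnegativity of $\lambda$ through a different normalization. Then I would do the reverse direction: given feasible $(\lambda_{ut},b_{ut})$, reconstruct $A_{ut}$ by integrating — set $A_{\rt t}$ to an arbitrary value (say track $-k\lambda_{\rt t}$ appropriately) and define $A_{ut}:=A_{p(u)t}-\lambda_{ut}$ descending the tree, checking consistency over time via \eqref{dual1Both}. Finally, I would verify the objectives coincide term by term: the simple-request terms $\lambda_{s_tt}-k\lambda_{\rt t}$ should map to $\Delta_tA_{s_t}-\sum_i\Delta_tA_{v_{it}}$ using that $-k\lambda_{\rt t}$ encodes the $k$ servers' baseline and the $b_{u,t-1}$ boundary terms $\sum_u\bar x_{u0}b_{u0}-\sum_u\bar x_{u,2T}b_{u,2T}$ together with the $\xi_{ut}b_{u,t-1}$ terms telescope into $-\sum_i\Delta_tA_{v_{it}}$ over relocation steps, using that $\sum_{u\ne\rt}\bar x_{ut}\,\Delta(\text{something})$ counts server altitudes.

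**Main obstacle.** The genuinely delicate part is not the spatial gradient identity (that is pure telescoping) but matching the objective functions exactly, because the original dual's objective is written in terms of $\lambda_{s_tt}$, $\lambda_{\rt t}$, the constants $\bar x_{u0},\bar x_{u,2T}$, and the $\xi_{ut}b_{u,t-1}$ terms, whereas the transformed objective is written in terms of \emph{server} altitude changes $\Delta_tA_{v_{it}}$ — quantities that depend on where \DC's servers sit. Bridging these requires the observation that $\sum_{u\ne\rt}x_{ut}\,(\text{altitude-slope across edge above }u)$ equals $\sum_i A_{v_{it}t}-k\cdot A_{\rt t}$, i.e. that summing edge-quantities weighted by the number of servers below each edge reconstructs the sum over servers of the server's altitude. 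Once that combinatorial identity is in hand — and it is the same ``count each server by the edges above it'' identity that underlies the LP relaxation itself — the telescoping over time of the $b_{u,t-1}=\Delta_tA_u$ terms against the $\xi_{ut}$ and boundary constants should fall out. I would also need the added boundary constraint $(b_{u,2T})$ precisely so that this time-telescoping closes up without a leftover term, which is presumably why the authors inserted it ``for technical reasons.'' A secondary, more mundane obstacle is getting every sign convention consistent between the two formulations; I would pin this down by checking a trivial instance ($k=1$, a single edge, one simple request) before trusting the general manipulation.
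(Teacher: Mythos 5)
Your change of variables has the roles of $\lambda$ and $b$ swapped, and this is a genuine error, not just a sign ambiguity. You propose $b_{u,t-1}:=\Delta_tA_u$ (a temporal difference) and $\lambda_{ut}:=\sum_{\text{path}}(A_{p(w)t}-A_{wt})$ (a spatial path sum). The paper's actual transformation is the reverse: $\lambda_{ut}:=\Delta_tA_u=A_{ut}-A_{u,t-1}$ is the \emph{temporal} difference, and $b_{ut}:=A_{ut}-A_{p(u)t}$ is the (single-edge) \emph{spatial} difference. One can see which way it has to go directly from the constraint structure, without guessing: constraint~\eqref{dual2Both} bounds $b_{u,t-1}$ in $[-1,1]$ (a two-sided box), while constraint~\eqref{dual3Both} requires $\lambda_{ut}\ge 0$ (one-sided). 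In the altitude LP, the two-sided constraint is on the \emph{spatial} slope $A_{ut}-A_{p(u)t}\in[-1,1]$ (\eqref{eq:slope}) and the one-sided constraint is on the \emph{temporal} increment $\Delta_tA_u\ge 0$ (\eqref{eq:AltGrow}). Matching one-sided to one-sided and two-sided to two-sided forces $\lambda\leftrightarrow\text{temporal}$ and $b\leftrightarrow\text{spatial}$. With your assignment you would have to derive $\lambda_{ut}\ge 0$ from a sum of terms each in $[-1,1]$, which, as you yourself notice, is not automatic --- but the resolution is not a renormalization; no additive constant on the altitudes can turn the two-sided box constraint on $b$ into a one-sided constraint on $\lambda$, because the discrepancy is structural, not a matter of offsets.

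With your assignment constraint~\eqref{dual1Both} also fails: $\lambda_{ut}-\lambda_{p(u)t}$ becomes a pure second spatial difference and $b_{ut}-b_{u,t-1}$ a pure second temporal difference, and these are not equal for a generic field $A$. With the paper's assignment both sides become the same mixed second difference $A_{ut}-A_{u,t-1}-A_{p(u)t}+A_{p(u),t-1}$, which is the telescoping you expected. Interestingly, the key combinatorial identity you correctly anticipate for the objective --- that summing the slope across the edge above $u$, weighted by the number of servers in $V_u$, reconstructs $\sum_i A_{v_{it}t}-kA_{\rt t}$ --- is an identity about the spatial slopes, i.e. about $b_{ut}$, so it is $\sum_{u\ne\rt}\bar x_{ut}b_{ut}$ that telescopes. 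Your own objective analysis is therefore implicitly using the correct identification while your stated change of variables contradicts it. Once you fix $\lambda_{ut}:=\Delta_tA_u$ and $b_{ut}:=A_{ut}-A_{p(u)t}$, both the constraint check and the objective calculation go through essentially as you outlined (the reverse direction is defined by $A_{ut}:=\sum_{\tau\le t}\lambda_{u\tau}+\sum_{v\ne\rt:\,u\in V_v}b_{v0}$, and the terms $\bar x_{u,t-1}b_{u,t-1}-\bar x_{ut}b_{ut}$ telescope in $t$ against the boundary constants, which is exactly why the constraint with multiplier $b_{u,2T}$ was added).
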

\begin{proof}
We refer to the first LP as the original LP and the second LP as the new LP.
Given a solution $A_{ut}$ to the new LP let
\begin{align*}
\lambda_{ut}&:=\Delta_tA_u:=A_{ut}-A_{u,t-1}\\
b_{ut}&:= A_{ut}-A_{p(u)t}.
\end{align*}

It is easy to see that feasibility for the new LP implies feasibility for the original LP; in particular, $\lambda_{ut}-\lambda_{p(u)t}=A_{ut}-A_{u,t-1}-A_{p(u)t}+A_{p(u),t-1}=b_{ut}-b_{u,t-1}$.
For the other direction, a feasible solution $\lambda_{ut}, b_{ut}$ to the original LP can be transformed by setting
\begin{align*}
A_{u t}:= \sum_{\tau=1}^t\lambda_{u \tau} + \sum_{v\ne \rt|u\in V_v}b_{v0}.
\end{align*}
Using this definition, we have:
\begin{align}
A_{ut}-A_{p(u)t} & =\sum_{\tau=1}^t(\lambda_{u \tau}-\lambda_{p(u)\tau}) + b_{u0} = b_{ut}\label{eq:Ab}\\
\Delta_t A_{u}& = A_{u t}- A_{u,t-1} = \lambda_{ut}\label{eq:lambda}
\end{align}
Again, feasibility for the new LP follows from the constraints of the original LP.

Finally, we consider the value of the objective function. Notice that equations \eqref{eq:Ab} and \eqref{eq:lambda} hold for both directions of the transformation, so we can use them in the following. Let $\bar{x}_{ut}$ denote the number of servers $i$ with $v_{it}\in V_u$. Using this, we can write
\begin{align}
\sum_{u\ne \rt}\bar{x}_{ut} b_{ut}&=\sum_{u\ne \rt}\sum_{i|v_{it}\in V_u} b_{ut}\nonumber\\
&= \sum_{i} \sum_{u\ne \rt|v_{it}\in V_u}b_{ut}\nonumber\\
&= \sum_i (A_{v_{it}t}-A_{\rt t})\nonumber\\
&=  - kA_{\rt t}+\sum_i A_{v_{it}t},\nonumber
\end{align}
where the penultimate equation follows from equation \eqref{eq:Ab}. Therefore,
\begin{align}
\sum_{u\ne \rt} \left(\bar{x}_{u,t-1} b_{u,t-1}-\bar{x}_{ut}b_{ut}\right) = k\Delta_t A_{\rt} -\sum_i \Delta_tA_i = k \lambda_{\rt t}-\sum_i \Delta_tA_i.\label{eq:xutbut}
\end{align}

Thus, for $t\in T_s$ we get
\begin{align}
D_t=\Delta_tA_{s_t}-\sum_i \Delta_t A_{i} = \lambda_{s_t t} -k \lambda_{\rt t} + \sum_{u\ne \rt} \left(\bar{x}_{u,t-1} b_{u,t-1}-\bar{x}_{ut}b_{ut}\right)\label{eq:DtTs}
\end{align}

For a relocation request at time $t\in T_r$, most servers stay in their old position (hence $\Delta_tA_i=\Delta_tA_{v_{it}}$ for these servers), except for the relocated server $i^*$, which moves from $v_{i^*,t-1}=s_t$ to $v_{i^*t}=d_t$. We therefore have
\begin{align*}
D_t= -\sum_i \Delta_tA_{v_{it}} =A_{d_t,t-1} - A_{s_t,t-1} - \sum_i \Delta_tA_i\qquad\qquad (\text{if }t\in T_r).
\end{align*}
Using \eqref{eq:Ab}, we can rewrite
\begin{align*}
A_{d_t,t-1} - A_{s_t,t-1}  &= \sum_{u | s_t \notin V_u, d_t \in V_u}b_{u,t-1} - \sum_{u | s_t \in V_u, d_t \notin V_u}b_{u,t-1} = \sum_{u\ne\rt}\xi_{ut} b_{u,t-1}
\end{align*}
Recalling \eqref{eq:xutbut}, we get for $t\in T_r$ that
\begin{align}
D_t= \sum_{u\ne\rt}\xi_{ut} b_{u,t-1} -k \lambda_{\rt t} + \sum_{u\ne \rt} \left(\bar{x}_{u,t-1} b_{u,t-1}-\bar{x}_{ut}b_{ut}\right).\label{eq:DtTr}
\end{align}

Combining \eqref{eq:DtTs} and \eqref{eq:DtTr} and noticing that the terms $\bar{x}_{u,t-1} b_{u,t-1}-\bar{x}_{ut}b_{ut}$  telescope over time for each $u\ne \rt$, the dual objective value of the new LP can be rewritten as
\begin{align*}
\sum_{t\in T_s\cup T_r} D_t& =  \sum_{t\in T_s} \left(\lambda_{s_tt}-k\lambda_{\rt t}\right) + \sum_{t\in T_r} \left(-k\lambda_{\rt t}+ \sum_{u\ne \rt}\xi_{ut}b_{u,t-1}\right) + \sum_{u\ne \rt} \left(\bar{x}_{u0} b_{u0}-\bar{x}_{u,2T} b_{u,2T}\right),
\end{align*}
which is precisely the objective value of the original dual.
\end{proof}

\section{\texorpdfstring{The $k$-taxi Problem on HSTs}{The k-taxi Problem on HSTs}}\label{sec:hst}

In this section we analyze \DC on HSTs, proving part part \ref{it:ThmUbHst} of Theorem \ref{thm:hst}. As a byproduct we also give a primal-dual proof of the $k$-competitiveness of \DC for the $k$-server problem on trees (Theorem \ref{thm:kserver}).

Besides constructing a dual solution, our analysis will also employ a potential function $\Psi$. The choice of $\Psi$ will be the only difference between the analyses for the $k$-server and $k$-taxi problem.
The dual solution and potential will be such that for all $t\in T_s \cup T_r$,
\begin{align}
\costup_t +\Psi_t-\Psi_{t-1} \le c \cdot D_t,\label{eq:pot}
\end{align}
where $\costup_t$ is the cost of movement \emph{towards the root} by \DC's servers while serving the $t$th request, $c$ is the desired competitive ratio, and $D_t$ is the increase of the dual objective function at time $t$, as given by \eqref{eq:Dt}. As discussed in Section \ref{sec:pre} we may use in this case the dual of the program that only measures movement cost towards the root. Thus, summing \eqref{eq:pot} for all times will then imply that \DC is $c$-competitive.

Recall that for a simple request ($t\in T_s$), \DC breaks the movement of the servers
into small steps in which the servers in $U\cup B$ move distance 1 towards the request. We will break the construction of a dual solution into these same small steps. We will denote by $\Delta \Psi$ the change of $\Psi$ during the step and by $\Delta D$ the contribution of the step to $D_t$. The cost paid by the servers (for moving towards the root) in the step is $|U|$. Using this notation, we satisfy \eqref{eq:pot} for simple requests if we show for each step that:
\begin{align}
|U| + \Delta \Psi  &\le c \cdot \Delta D.\label{eq:potStep}
\end{align}

In Section \ref{sec:pd} we describe how we construct the dual solution by going backwards in time.
In Section \ref{sec:potkserver} we design a simple potential function that proves the $k$-competitiveness of $k$-server on trees.
In Section \ref{sec:potktaxi} we describe a more involved potential function proving the competitiveness for $k$-taxi on HSTs.

\subsection{Constructing the Dual Solution}\label{sec:pd}

As already mentioned, we break the construction of a dual solution into the same small steps that already partition the movement of $\DC$. That is, we will define altitudes also for the times between two successive small steps. We will call a dual solution where altitudes are also defined for times between small steps an \emph{extended dual solution}.

We will construct this dual solution by induction backwards in time. For a given point in time, let $A_u$ be the altitude of a vertex $u$ at this time as determined by the induction hypothesis, and let $v_i$ be the location of server $i$ at this time. We will denote by $A_u'$ and $v_i'$ the new values of these quantities at the next point in reverse-time. We denote by $\Delta A_u:=A_u-A_u'$ and $\Delta A_i:=A_{v_i}-A_{v_i'}'$ the change of the altitude of vertex $u$ and server $i$, respectively, in \emph{forward-time} direction. For the update due to a small step when serving a simple request $s_t$, define
\begin{align*}
\Delta D:= \Delta A_{s_t}-\sum_i \Delta A_{i}.
\end{align*}
Thus, the sum of the quantities $\Delta D$ for all small steps corresponding to a simple request at time $t$ is precisely $D_t$. In reverse-time, we can think of $\Delta D$ as the amount by which the request's altitude \emph{decreases} \emph{plus} the amount by which the sum of server altitudes \emph{increases}.

We will update altitudes so as to satisfy the following two rules:
\begin{enumerate}[(i)]
	\item $\Delta A_{u} \ge 0$ for all $u\in V$ (constraint \eqref{eq:AltGrow} is satisfied): In reverse-time, we only \emph{decrease} the altitude of any vertex (or leave it unchanged).\label{it:r1}
	\item $A_u-A_{p(u)}\in\{0,1\}$ for all $u\ne \rt$ at all times (constraint \eqref{eq:slopeMon} is satisfied):  The altitude of $u$ and $p(u)$ is the same, or the altitude of $u$ is higher by one than the altitude of $p(u)$. Overall, altitudes are non-increasing towards the root.\label{it:r2}
\end{enumerate}

\begin{lemma}\label{lem:backw}
	There exists a feasible extended dual solution satisfying:
	\begin{itemize}
		\item For a relocation request at time $t$: $D_t=0$.
		\item For a small step where $B=\emptyset$: $\Delta D\ge 1$.
		\item For a small step where $B=\{j\}$: $\Delta D\ge 0$.
	\end{itemize}
\end{lemma}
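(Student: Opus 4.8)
The plan is to construct the extended dual solution by reverse-time induction, maintaining the invariants (i) and (ii) throughout, and starting from the "flat" terminal configuration $A_{u,2T} = 0$ for all $u$ (which is trivially feasible and respects both rules). For each step in reverse-time I would specify how altitudes change, depending on whether we are undoing a relocation request or a small step of a simple request.

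First, consider undoing a \textbf{relocation request} at time $t \in T_r$. Here \DC does not move, so in forward-time the only change is that the relocated server $i^*$ teleports from $s_t$ to $d_t$. I would set $\Delta_t A_u = 0$ for all $u$ — that is, altitudes are completely unchanged across a relocation step. Then $D_t = -\sum_i \Delta_t A_{v_{it}} = 0$ automatically (each non-relocated server sits at a vertex whose altitude didn't change, and for $i^*$ we have $\Delta_t A_{v_{i^*t}} = A_{d_t,t} - A_{d_t,t-1} = 0$ since in forward-time $A_{d_t}$ is the same before and after). Invariants (i) and (ii) are preserved since nothing changes. So the relocation case is essentially free.

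The substantive case is undoing a \textbf{small step} of a simple request $s_t$, where servers in $U \cup B$ each move one short edge toward $s_t$. In reverse-time I want to \emph{lower} the altitude of $s_t$ while \emph{raising} the altitudes of the current server locations $\{v_i\}$, subject to keeping all slopes in $\{0,1\}$ and all altitude changes $\ge 0$ in forward-time (equivalently, never raising any altitude when going backward). The natural move: in reverse-time, decrease $A_{s_t}$ by $1$ and, to maintain rule (ii) along the path from $s_t$ up toward the root, propagate this decrease upward — decrease $A_u$ by $1$ for every $u$ on the path from $s_t$ to the root whose current slope $A_u - A_{p(u)}$ equals $1$, stopping at the first ancestor where the slope is $0$ (there we can absorb the drop without touching the parent). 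I would then argue, using the Observation about the structure of $U$ and $B$, that each server $j \in U$ sits on (or its subtree hangs off of) this path in a way that its altitude gets decreased by $1$ in reverse-time too — contributing $+1$ to $\sum_i \Delta A_i$ in forward-time for each server in $U$ — while the single server in $B$ (if present) lies below $s_t$ and its altitude is \emph{not} decreased. Combining: in forward-time $\Delta A_{s_t} = +1$ (its altitude went down by one in reverse-time, hence up by one in forward-time... here I need to be careful with signs — I would instead decrease $A_{s_t}$ in reverse-time only when this is consistent, and otherwise arrange a symmetric "raise the server locations" update), giving $\Delta D = \Delta A_{s_t} - \sum_i \Delta A_i \ge 1 - 0 = 1$ when $B = \emptyset$, and $\Delta D \ge 0$ when $B = \{j\}$ because then exactly one unit of "server altitude gain" cannot be covered.

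The main obstacle I anticipate is making the upward-propagation of the altitude change both \emph{well-defined} and \emph{slope-preserving} simultaneously for all affected paths: when several servers of $U$ sit in disjoint subtrees hanging off the path to $s_t$, a single decrement along one path may not correctly account for all of them, and I may need to decrement altitudes on a whole subtree (the union of the path to $s_t$ and the subtrees rooted at the $U$-servers) rather than a single path, then verify that at the boundary of this region every slope stays in $\{0,1\}$ and that no altitude is accidentally raised in reverse-time. Getting the bookkeeping right — precisely which vertices change, and checking the slope constraint \eqref{eq:slopeMon} at the region boundary and the monotonicity rule (i) — is where the real work lies; once the region is correctly identified, the inequalities $\Delta D \ge 1$ (resp.\ $\ge 0$) should follow by counting how many server locations lie inside the decremented region versus the one server in $B$ that lies outside it. I also need the induction to stay globally consistent: the altitudes defined "between small steps" must glue with those at integer times $t$, which should be automatic since the last small step of serving $s_t$ ends exactly at the post-request configuration.
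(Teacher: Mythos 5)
Your high-level plan — reverse-time induction, keep altitudes fixed across relocation requests, and for each small step decrement the altitude of a connected region containing $s_t$ while checking rules (i) and (ii) at the boundary — is exactly the paper's approach. However, two load-bearing details are wrong or missing, and as sketched the argument would not close.

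First, the decremented region is oriented the wrong way. You propose to decrement ``the union of the path to $s_t$ and the subtrees rooted at the $U$-servers,'' and you claim each $U$-server's altitude drops by $1$, ``contributing $+1$ to $\sum_i \Delta A_i$.'' But $\Delta D = \Delta A_{s_t} - \sum_i \Delta A_i$, so raising $\sum_i \Delta A_i$ \emph{hurts}: your proposal would give $\Delta D = 1 - |U|$, which can be arbitrarily negative. The correct region (the paper's $V'$) is the connected component containing $s_t$ obtained by cutting exactly the edges traversed in this small step, i.e.\ $V' = V\setminus \bigcup_{i\in U}V_{v_i'}$ (and intersected with $V_{v_j}$ when $B=\{j\}$). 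Crucially, this \emph{excludes} the subtrees $V_{v_i'}$ below the servers' earlier positions, so $A'_{v_i'}=A_{v_i'}$ and hence $\Delta A_i = A_{p(v_i')} - A_{v_i'} = 0$ in the case where the traversed edge is flat. The whole point is that the server locations at the \emph{earlier} time sit outside the decremented region, so their contribution to $\sum_i\Delta A_i$ vanishes; your region includes them and breaks this.

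Second, you only consider the ``decrement'' branch, but there is a necessary case split on the slope of the traversed edges that you omit. If some $i\in U$ already sits on an edge with $A_{v_i'} - A_{p(v_i')} = 1$, you cannot lower the component containing $s_t$ across that edge without violating rule (ii) (the slope would become $2$). In that case the paper leaves all altitudes unchanged and gets $\Delta D \ge 1$ ``for free'' because that server's altitude drops by $1$ in forward time ($\Delta A_i = -1$). Symmetrically, when $B=\{j\}$ and the edge below $v_j'$ is flat, leaving altitudes fixed already yields $\Delta D \ge 0$. Your path-propagation idea (``propagate the decrease upward until the first slope-$0$ ancestor'') blends these cases in a way that neither preserves rule (ii) nor produces the right accounting; the clean route is the dichotomy: either some traversed edge already has the slope you need and you do nothing, or all traversed edges are flat (resp.\ flat for $U$ and sloped for $j$) and you decrement the cut component containing $s_t$, which then changes only $A_{s_t}$ (and $A_{v_j}$) among the relevant quantities. (Minor: when $B=\{j\}$, $s_t$ lies \emph{below} $v_j$ in the subtree $V_{v_j}$, not the other way around as you state, and $v_j'=p(v_j)$ is the location you must keep outside the decremented region.)
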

\begin{proof}
For the base of the reverse time induction, let $A$ be some arbitrary constant and define the altitude of every vertex $u\in V$ at the time after the final request to be $A$. This trivially satisfies rule \ref{it:r2}.

\paragraph{Relocation requests ($t\in T_r$):} We guarantee $D_t=0$ by simply keeping all altitudes unchanged.

\paragraph{Simple requests ($t\in T_s$):} Consider a small step of the simple request to $s_t$. In reverse-time, any server $i\in U$ moves from $v_i=p(v_i')$ to $v_i'$ during the small step. Then, $\Delta A_{i}= A_{p(v_i')}-A_{v_i'}'$. By rule \ref{it:r2} of the induction hypothesis, we have $A_{p(v_i')}-A_{v_i'}\in\{-1,0\}$. Similarly, if $B=\{j\}$, then $j$ moves from $v_j$ to $v_j'=p(v_j)$ in reverse-time, and $\Delta A_{j}= A_{v_j}-A_{p(v_j)}'$.

\paragraph{Case 1: $B=\emptyset$:}

	\begin{figure}
		\begin{center}
			\includegraphics{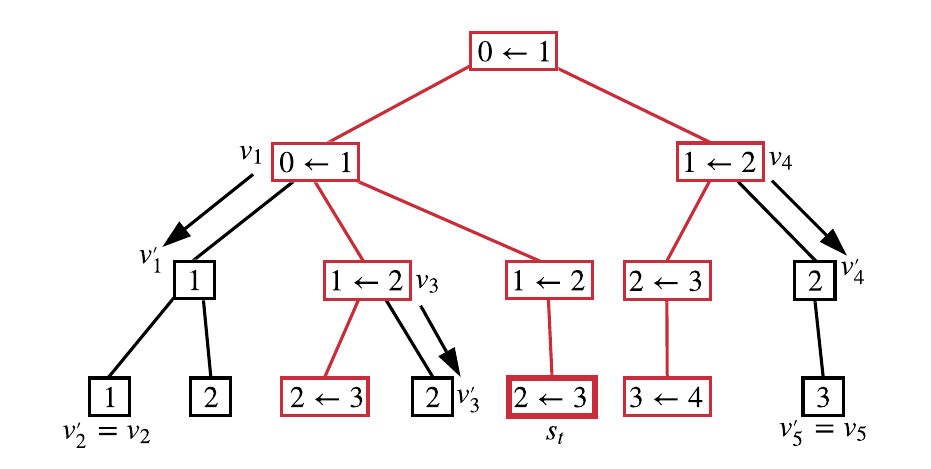}
			\caption{Example of a dual update for a small step of a simple request when $B=\emptyset$. The current request is $s_t$, and the vertices colored red are the component $V'$. The arrows show the movement of servers (in reverse time direction). Numbers in boxes represent altitudes, where $b \gets a$ means that the altitude is updated from $a$ to $b$ (in reverse time).}
			\label{fig:dual-upade}
		\end{center}
	\end{figure}

If for at least one server $i\in U$ we have $A_{v_i'}-A_{p(v_i')}=1$, then we set $A_{u}':= A_{u}$ for all vertices. In this case, $\Delta_t A_{i} =-1$ for the aforementioned server $i$, and for all other servers in $i\in U$, $\Delta_t A_{i} \leq 0$. Overall, $\Delta D\geq 1$.\\
Otherwise, for all $i\in U$, $A_{v_i'}-A_{p(v_i')}=0$ meaning that every edge along which a server moves during this small step connects two vertices of the same altitude (an example of the update of the dual for this case is shown in Figure~\ref{fig:dual-upade}). 
Let $V'=V\setminus \bigcup_{i\in U} V_{v_i'}$ be the connected component containing $s_t$ when cutting all edges traversed by a server in this step. Notice that $V'$ does not contain $v_i'$ even for servers $i$ that are not moving during the step, since those are located in subtrees below the servers of $U$. For each $u\in V'$, we set $A_{u}':= A_{u}-1$ (or $\Delta A_{u}=1$), and otherwise we keep the altitudes unchanged. In particular, rule \ref{it:r1} is satisfied. Since for all cut edges $A_{v_i'}-A_{p(v_i')}=0$, then for these edges $A_{v_i'}'-A_{p(v_i')}'=1$, and rule \ref{it:r2} also remains satisfied. As stated, servers only moved along edges connecting vertices of the same altitude (before the update), and all server positions $v_i'$ are outside the component $V'$, so $\Delta A_i=0$ for each server. But the component contains the request $s_t$, so $\Delta A_{s_t}= A_{s_t}- A_{s_t}'= 1$. Overall, we get $\Delta D=1$. 

\paragraph{Case 2: $B=\{j\}$:} 
	\begin{figure}
		\begin{center}
			\includegraphics{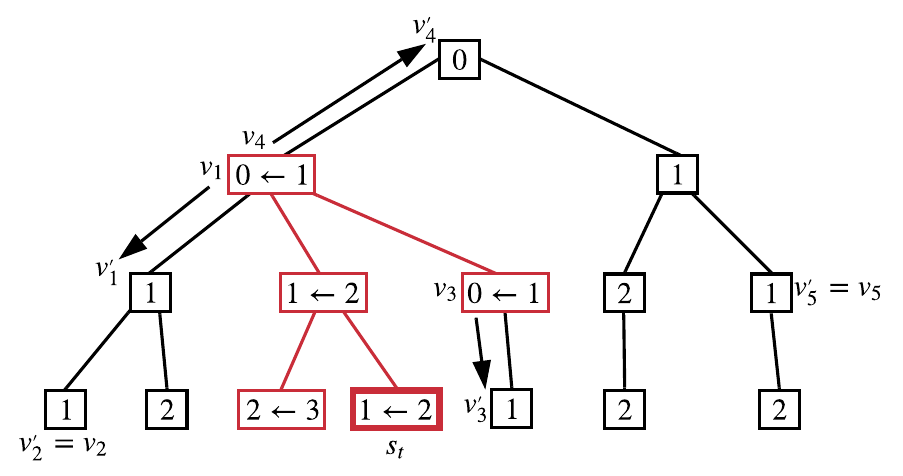}
			\caption{Example of a dual update for a small step of a simple request when $B=\{j\}$. The current request is $s_t$, and the vertices colored red are the component $V'$. The arrows show the movement of servers (in reverse time direction). Numbers in boxes represent altitudes, where $b \gets a$ means that the altitude is updated from $a$ to $b$ (in reverse time).}
			\label{fig:dual-upade2}
		\end{center}
	\end{figure}
If some server $i\in U$ moves in reverse-time to a vertex of higher altitude ($A_{p(v_i')}-A_{v_i'}=-1$) \emph{or} server $j$ moves to a vertex of the same altitude ($A_{p(v_j)}-A_{v_j}=0$), then we set $A_{u}':= A_{u}$ for each $u\in V$. In this case, $\Delta A_{i}\in\{-1,0\}$ for all $i\in U$, and $\Delta A_{j}\in\{0,1\}$, and the aforementioned condition translates to the condition that $\Delta A_{i}=-1$ for some $i\in U$ \emph{or} $\Delta A_{j}=0$.
Either case then guarantees that $\Delta D\ge 0$.\\
Otherwise, $j$ moves (in reverse-time) to a vertex of lower altitude ($A_{p(v_j)}-A_{v_j}=-1$) \emph{and} all servers in $U$ move along edges of unchanging altitude ($A_{p(v_i')}-A_{v_i'}=0$) (an example of the update of the dual for this case is shown in Figure~\ref{fig:dual-upade2}). Let $V'=V_{v_j}\setminus \bigcup_{i\in U} V_{v_i'}$ be the connected component containing $s_t$ when cutting all edges traversed by servers in this step. We decrease the altitudes of all vertices in this component by $1$ ($A_{u}':= A_{u}-1$ for $u\in V'$) and leave other altitudes unchanged, satisfying rule~\ref{it:r1}. As $A_{p(v_j)}-A_{v_j}=-1$ and $A_{p(v_i')}-A_{v_i'}=0$ for $i\in U$, also rule \ref{it:r2} is satisfied. Again, the locations $v_i'$ of any server $i$ (moving or not) are outside the component, so the update of altitudes does not affect $\Delta A_i$. Thus, $\Delta A_j=A_{v_j}-A_{p(v_j)}=1$ and, for each server $i\ne j$, $\Delta A_{i}= 0$. But the altitude of $s_t$ is decreasing by $1$ in reverse-time, so $\Delta_tA_{s_t}= 1$. Overall, we get that $\Delta D=0$.
\end{proof}

\paragraph{Potential Function Requirements.} Based on Lemma~\ref{lem:backw}, we conclude that the following requirements of a potential function $\Psi$ are sufficient to conclude inequality \eqref{eq:pot} (resp. \eqref{eq:potStep}) and therefore $c$-competitiveness of \DC.

\begin{obs}\label{obs:phireq}
\DC is $c$-competitive if there is a potential $\Psi$ satisfying:
\begin{itemize}
\item For a relocation request at time $t$: $\Psi_t=\Psi_{t-1}$.
\item In a single step of a simple request, where no server is going downwards: $|U| + \Delta \Psi \leq c$.
\item In a single step of a simple request, where there is a server moving downwards: $|U| + \Delta \Psi \leq 0$.
\end{itemize}
\end{obs}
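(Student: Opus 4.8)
The plan is simply to feed the hypothesized potential $\Psi$ into the machinery already set up before the observation: I would use the extended dual solution of Lemma~\ref{lem:backw} and check that the three properties of $\Psi$ force inequality \eqref{eq:pot} (equivalently, its per-step form \eqref{eq:potStep}) at every time step. As the surrounding text already notes, summing \eqref{eq:pot} over all $t$ then yields $c$-competitiveness, since the potential telescopes, contributing only the bounded term $\Psi_0-\Psi_{\mathrm{final}}$, and the value $D$ of the (feasible) dual solution of Lemma~\ref{lem:backw} lower-bounds the optimal cost up to the additive constants discussed in Section~\ref{sec:LP}.

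First I would handle relocation requests $t\in T_r$: Lemma~\ref{lem:backw} gives $D_t=0$, \DC incurs no cost so $\costup_t=0$, and the first property of $\Psi$ gives $\Psi_t=\Psi_{t-1}$; hence $\costup_t+\Psi_t-\Psi_{t-1}=0=c\,D_t$, which is exactly \eqref{eq:pot}.

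Next I would handle a simple request $t\in T_s$. Since $\costup_t$, $\Psi_t-\Psi_{t-1}$ and $D_t$ are, respectively, the sums of $|U|$, $\Delta\Psi$ and $\Delta D$ over the small steps into which \DC (and, by construction, the dual update) splits the service of the request, it suffices to verify \eqref{eq:potStep}, i.e.\ $|U|+\Delta\Psi\le c\,\Delta D$, for each individual small step. If $B=\emptyset$, Lemma~\ref{lem:backw} gives $\Delta D\ge 1$, so $c\,\Delta D\ge c\ge |U|+\Delta\Psi$ by the second property. If $B=\{j\}$, Lemma~\ref{lem:backw} gives $\Delta D\ge 0$, so $c\,\Delta D\ge 0\ge |U|+\Delta\Psi$ by the third property. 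Either way \eqref{eq:potStep} holds, which completes the verification.

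I do not expect a genuine obstacle here: the observation is essentially a bookkeeping corollary of Lemma~\ref{lem:backw}. The only points that need care are (a) that $\costup_t$, $\Delta\Psi$ and $D_t$ genuinely decompose additively over the small steps, so that the per-step hypotheses on $\Psi$ really do suffice; and (b) the additive-constant accounting in the final summation — boundedness of $\Psi$, the last-configuration constraint added to the primal LP, and the equivalence of upward and downward movement cost on HSTs (and for $k$-server on trees) — which is what lets the step-wise inequality \eqref{eq:pot} be turned into the clean statement that \DC is $c$-competitive.
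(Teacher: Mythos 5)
Your proposal is correct and follows exactly the route the paper intends: Observation~\ref{obs:phireq} is stated as an immediate bookkeeping consequence of Lemma~\ref{lem:backw} together with the reduction to the per-step inequality \eqref{eq:potStep} set up just before it, and your case analysis (relocation, $B=\emptyset$ with $\Delta D\ge 1$, $B=\{j\}$ with $\Delta D\ge 0$) is precisely that argument. The points you flag about additive decomposition over small steps and the additive-constant accounting at the end are the right ones to keep in mind, but none of them is a gap.
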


\subsection{\texorpdfstring{Finalizing the Analysis for $k$-Server on Trees}{Finalizing the Analysis for k-Server on Trees}}\label{sec:potkserver}

In the $k$-server problem there are no relocation requests, and hence we only need to satisfy the two requirements involving simple requests in Observation \ref{obs:phireq}. We define the following potential function:
\begin{align*}
\Psi=-\sum_{i<j}d_{\lca(i,j)},
\end{align*}
where the sum is taken over all pairs of servers $\{i,j\}$ and $d_{\lca(i,j)}$ denotes the weighted depth (distance from the root $\rt$) of the least common ancestor of $i$ and $j$.

\paragraph{Case 1: $B=\emptyset$:}

In this case, the depth of the least common ancestor decreases by $1$ only for $i\in U$ and $j$ which is located in the subtree of $i$.  The total number of servers that are located in the subtrees below servers of $U$ is precisely $k-|U|$, and hence the potential function in this case grows by $k-|U|$. Thus, $|U| + \Delta \Psi = k$, satisfying Observation \ref{obs:phireq}.

\paragraph{Case 2: $B=\{j\}$:}

In this case, all the servers of $U$ are located in the subtree of $j$. For each server $i \in U$, the depth of the least common ancestor with $j$ increases by $1$, therefore contributing $-|U|$ to $\Delta \Psi $. There may be more servers in the subtree of $j$ that do not belong  to $U$. For each such server $i'$, there is a server $i \in U$ on its path to $j$. The depth of the least common ancestor of $i'$ and $i$ decreases by $1$, while the depth of the least common ancestor of $i'$ and $j$ increases by $1$. Thus, the total contribution of $i'$ to $\Delta \Psi $ is $0$, and likewise for servers outside the subtree of $j$. Hence,
$|U| + \Delta \Psi = 0$, satisfying Observation \ref{obs:phireq}.

\subsection{\texorpdfstring{Finalizing the Analysis for $k$-Taxi on HSTs}{Finalizing the Analysis for k-Taxi on HSTs}}\label{sec:potktaxi}

We show that the competitive ratio of the $k$-taxi problem on HSTs of combinatorial depth $d$ is
\begin{align*}
c_{kd}=\sum_{h=1}^{k\land d}\binom{k}{h},
\end{align*}
which is the number of non-empty sets of at most $d$ servers. We need the following recurrence relation for $c_{kd}$.
\begin{lemma}\label{lem:recurrence}
	For $k\ge 0$ and $d\ge 1$,
\begin{align*}
c_{kd}&=k+\sum_{i=0}^{k-1}c_{i,d-1}.
\end{align*}
\end{lemma}
\begin{proof}
The statement follows by induction on $k$. For the induction step, we have
\begin{align*}
k+\sum_{i=0}^{k-1}c_{i,d-1}&=1+c_{k-1,d-1}+\left(k-1 +\sum_{i=0}^{k-2}c_{i,d-1}\right)\\
&= 1+c_{k-1,d-1} + c_{k-1,d}\\
&= \sum_{h=1}^{k\land d}\binom{k-1}{h-1} + \sum_{h=1}^{(k-1)\land d}\binom{k-1}{h}\\
&= \sum_{h=1}^{k\land d}\binom{k}{h} = c_{kd}.\qedhere
\end{align*}
\end{proof}

For a given point in time, fix a naming of the servers by the numbers $0,\dots,k-1$ such that their heights $h_0\le\dots \le h_{k-1}$ are non-decreasing. If we consider a (small) step, we choose this numbering such that $h_0\le\dots \le h_{k-1}$ holds both {\em before} and {\em after} the step. Since it is not possible that a server is strictly higher than another server before the step and then strictly lower afterwards, such a numbering exists. Let $U, B\subseteq\{0,\dots,k-1\}$ be the sets of servers that move upwards (i.e., towards the root) and downwards (away from the root), respectively. Note that by our earlier observation,
$B$ is either a singleton $\{j\}$ (one server moves downwards) or the empty set (no server moves downwards). We sometimes write a sum over elements of $B$, so such a sum is either $0$ or a single term.

Let $\alpha_\ell$ denote the weighted height of the node layer at combinatorial height $\ell$ in the HST (i.e., the distance of these vertices from the leaf layer). Thus, $0=\alpha_0<\alpha_1< \dots< \alpha_d$. We use the following potential function at time $t$:
\[\Psi_t=\sum_{i=0}^{k-1} \sum_{\ell=0}^{d-1} c_{i\ell} \cdot  \max\left\{\alpha_{\ell},h_{it}\land \alpha_{\ell+1}\right\},\]
where $h_{0t}\le\dots \le h_{k-1,t}$ are the weighted heights of the servers.
We next show that $\Psi$ satisfies the requirements of Observation \ref{obs:phireq} with $c=c_{kd}$.

\paragraph{A relocation request, $t\in T_r$:}
Since all requests are at the leaves and thus the height of the moving server is 0, we have $\Psi_t=\Psi_{t-1}$.

\paragraph{A small step of a simple request, $t\in T_s$:}
Let $\ell_i$ be such that the edge traversed by server $i$ during the step is located between the node layers of combinatorial heights $\ell_{i}$ and $\ell_{i+1}$. Thus, the weighted height of $i$ lies in $[\alpha_{\ell_i},\alpha_{\ell_i+1}]$ during the step. Then
\begin{align}
\Delta \Psi & = \sum_{i\in U}c_{i\ell_i} - \sum_{j\in B} c_{j\ell_j}. \label{eq:dPsi}
\end{align}
There are two cases to be considered.
\paragraph{Case 1: $B=\emptyset$:}
\begin{align}
|U| + \Delta \Psi & \leq k+\sum_{i=0}^{k-1} c_{i\ell_i} \label{ineq11}\\
& \leq  k+\sum_{i=0}^{k-1} c_{i,d-1} = c_{kd}. \label{ineq12}
\end{align}
Inequality \eqref{ineq11} follows from \eqref{eq:dPsi} and since $|U|\leq k$. Inequality \eqref{ineq12} follows from the definition of $c_{kd}$. The final equation is due to Lemma~\ref{lem:recurrence}.

\paragraph{Case 2: $B=\{j\}$:}

In this case $U\subseteq\{0,\dots, j-1\}$ and $\ell_i\le \ell_j-1$ for each $i\in U$.
Therefore,
\begin{align}
|U| + \Delta \Psi & \le j - c_{j\ell_j} + \sum_{i\in U}c_{i\ell_i} \label{ineq21}\\
& \leq j- c_{j\ell_j} +\sum_{i=0}^{j-1} c_{i,\ell_j-1} =0.\label{ineq22}
\end{align}
Inequality \eqref{ineq21} follows from \eqref{eq:dPsi} and since $U\subseteq\{0,\dots, j-1\}$. Inequality \eqref{ineq22} follows from the definition of $c_{kd}$. The equation is due to Lemma~\ref{lem:recurrence}.

\section{\texorpdfstring{The $k$-taxi Problem on Weighted Trees}{The k-taxi Problem on Weighted Trees}}\label{sec:generaltree}
In this section we analyze \DC on general (weighted) trees, proving part \ref{it:ThmUbTree} of Theorem \ref{thm:hst}.
Specifically, we prove the following theorem.
\begin{theorem}\label{thm:weightedacc}
The competitive ratio of \DC on weighted trees of depth $d$ is at most
\begin{align*}
&4d-1&&\text{if }k=2\\
&\frac{2k(k-1)^d-3k+2}{k-2}=O(k^d)&&\text{if }k\ge 3.
\end{align*}
\end{theorem}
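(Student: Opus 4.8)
\textbf{Proof plan for Theorem~\ref{thm:weightedacc}.}
The plan is to mimic the HST analysis of Section~\ref{sec:hst} as closely as possible, using the same backward-in-time dual construction (Lemma~\ref{lem:backw}) and the sufficient conditions of Observation~\ref{obs:phireq}, but replacing the HST potential by one adapted to weighted trees. The key structural difference is that on a weighted tree the requests are no longer confined to leaves, so a relocation request need not have $\Psi_t=\Psi_{t-1}$, and the upward/downward cost symmetry that let us ignore downward movement is lost. I therefore expect to have to work with the full dual LP (constraint~\eqref{dual2Both} / \eqref{eq:slope} with range $[-1,1]$) and to account directly for both upward and downward movement, i.e.\ to prove an analogue of \eqref{eq:pot} with $\cost_t$ (total cost) rather than $\costup_t$. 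The dual construction of Lemma~\ref{lem:backw} still goes through with rule~\ref{it:r2} relaxed to $A_u-A_{p(u)}\in\{-1,0,1\}$, and one should check that the same case analysis ($B=\emptyset$ gives $\Delta D\ge 1$, $B=\{j\}$ gives $\Delta D\ge 0$, relocation gives $D_t$ controlled) carries over.

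The core of the argument is designing the new potential. First I would set up, for a node $u$ at weighted depth (from the root) $w_u$ and for each server $i$, a per-server-per-edge contribution analogous to $c_{i\ell}$, but where the role of the combinatorial layer index $\ell$ is played by a coefficient that grows geometrically with the combinatorial depth of the edge. The target competitive ratio $\frac{2k(k-1)^d-3k+2}{k-2}$ strongly suggests a recurrence of the form $c_{kd}=2k+(k-1)\,c_{k,d-1}$ (one checks this solves to the claimed closed form, with the $4d-1$ case for $k=2$ being the degenerate linear solution), and the factor $2$ reflects that each unit of server movement may now be charged both going up and coming back down. So the plan is: (1) define $\Psi$ as a sum over servers $i$ and combinatorial depths $\ell$ of (coefficient depending on $i,\ell$) times (a clipped version of the distance the server has descended past depth $\ell$), exactly paralleling the $\max\{\alpha_\ell,h_{it}\wedge\alpha_{\ell+1}\}$ term but now on a weighted tree; (2) show that in a small step where server $i$ traverses a short edge at combinatorial depth $\ell_i$, $\Delta\Psi$ equals a signed sum of these coefficients over $U$ and $B$, just as in \eqref{eq:dPsi}; (3) verify the two inequalities of Observation~\ref{obs:phireq} (adapted to total cost) reduce to the recurrence, using that in Case~2 every server in $U$ lies strictly deeper than $j$ along the path, so its depth index is at most $\ell_j-1$.

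The main obstacle will be handling relocation requests, which on weighted trees genuinely change $\Psi$ and have no movement cost to absorb the change. The idea is to absorb this into the dual: a relocation request moving a server from $s_t$ to $d_t$ changes the dual objective term by $A_{d_t,t-1}-A_{s_t,t-1}-\sum_i\Delta_tA_i$, and since in the backward construction we keep altitudes unchanged during a relocation step, $\Delta_tA_i=0$ for the non-relocated servers while the relocated server contributes $A_{d_t}-A_{s_t}$. The delicate point is that the altitude difference along the relocation path must match (up to the factor $c$) the change in $\Psi$ caused by moving that one server between two arbitrary depths; this is where the geometric growth of the coefficients and the clipping in $\Psi$ must be balanced just right, and where the $[-1,1]$ slope constraint (rather than $[0,1]$) is essential, since the relocation can move a server toward the root. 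I would prove a lemma bounding $|\Psi_t-\Psi_{t-1}|$ for a relocation in terms of the weighted length of the relocation path, and then show the backward dual construction can be arranged (by choosing which direction to tilt altitudes along that path) so that $c\cdot D_t$ dominates this change; fusing this with the simple-request inequalities then yields \eqref{eq:pot} and hence the stated competitive ratio. The $k=2$ bound should then fall out by specializing the recurrence, noting that $(k-1)=1$ collapses the geometric term to a linear one giving $4d-1$.
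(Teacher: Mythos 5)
Your proposal correctly identifies that the HST analysis breaks on weighted trees because (i) upward/downward cost symmetry is lost, and (ii) relocations can change the potential $\Psi$ without any algorithmic cost to pay for it. You also correctly flag the relocation issue as ``the main obstacle.'' Unfortunately, the resolution you sketch for that obstacle cannot work with the dual LP as defined, and the paper handles it in a structurally different way.

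The gap: you propose to keep a height-based potential $\Psi$ and to absorb the relocation-induced change $\Psi_t-\Psi_{t-1}$ into the dual term $c\cdot D_t$, possibly by ``tilting altitudes along the relocation path.'' But for $t\in T_r$ the definition \eqref{eq:Dt} gives $D_t=-\sum_i\Delta_tA_{v_{it}}$, and each summand $\Delta_tA_{v_{it}}$ is the altitude change of a fixed vertex, which is constrained to be non-negative by \eqref{eq:AltGrow}. Hence $D_t\le 0$ for \emph{every} relocation request, no matter how you set the altitudes. (Your intermediate expression $A_{d_t,t-1}-A_{s_t,t-1}-\sum_i\Delta_tA_i$ looks like it could be positive, but $\sum_i\Delta_tA_i$ already contains the term $A_{d_t,t-1}-A_{s_t,t-1}$ for the relocated server, so when altitudes are frozen it collapses to $D_t=0$, and when they move it only gets smaller.) Since a relocation can move a server arbitrarily far from the root, any height-based $\Psi$ of the form you describe can strictly increase at a relocation step, so the required inequality $\Psi_t-\Psi_{t-1}\le c\cdot D_t$ fails. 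There is no freedom in the dual to repair this.

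The paper's fix is to drop the potential function entirely. Instead, it constructs the backward-in-time dual to satisfy a relaxed slope constraint $A_{ut}-A_{p(u)t}\in[m_{d_u},M_{d_u}]$, where $m_i,M_i$ are two interleaved sequences growing geometrically with the combinatorial depth $d_u$ (not a single recurrent coefficient as you propose). Because the ``slack'' is now stored in the slope bounds rather than in a potential, a relocation step can simply keep all altitudes fixed, giving $\cost_t=D_t=0$ with nothing further to prove. The direct inequality $\cost_t\le D_t$ is then verified step by step for simple requests using the four arithmetic properties of $m_i,M_i$ in Lemma~\ref{lem:weightedMm}, and at the very end all altitudes are divided by $c=\max_i\{M_i,|m_i|\}=M_d$ to obtain a feasible dual, which is exactly where the competitive ratio appears. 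So the overall shape of your plan (backward dual plus per-step accounting, with geometrically growing coefficients) is on the right track, but the decisive move you are missing is to replace the potential by a depth-dependent relaxation of the dual feasibility constraints; only then does the relocation case become vacuous.
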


\noindent There are two reasons why our analysis for HSTs fails on general weighted trees:
\begin{enumerate}
	\item The costs of movement towards and away from the root no longer need to be within a constant of each other. E.g., if relocation repeatedly brings servers closer to the root, then most cost would be incurred while moving away from the root.
	\item The potential is no longer constant under relocation requests, because servers can be relocated to and from internal vertices, affecting their height.
\end{enumerate}

To address the first issue, we use the LP formulations that measure movement cost both towards and away from the root. As discussed in Section~\ref{sec:LP}, the only change in the dual is to replace $A_{ut}-A_{p(u)t} \in [0,1]$ by  $A_{ut}-A_{p(u)t} \in [-1,1]$. To address the second issue, we will eliminate the potential function from our proof. Instead, we will construct a dual solution that bounds the cost of \DC in each step, but it may violate the constraints $A_{ut}-A_{p(u)t}\in[-1,1]$. However, it will still satisfy $A_{ut}-A_{p(u)t}\in[-c,c]$ for some $c$. Thus, dividing all dual variables by $c$ yields a feasible dual solution, and $c$ is our competitive ratio.

\subsection{Proof of Theorem \ref{thm:weightedacc}}

We now turn to a more detailed description of the analysis. The proof uses the same notation and the observations as in Section~\ref{sec:hst}.
Let $d$ be the depth of the tree. For $i=1,\dots,d$, let
\begin{align*}
m_i&:= \begin{cases}
-2(d-i)-1&\text{if }k=2\\
\frac{-2(k-1)^{d-i+1}+k}{k-2}&\text{if }k\ge 3
\end{cases}\\
M_i&:= \begin{cases}
2(d+i)-1&\text{if }k=2\\
\frac{2k(k-1)^d-2(k-1)^{d-i+1}-k}{k-2}&\text{if }k\ge 3.
\end{cases}
\end{align*}

These quantities have been chosen to satisfy the following lemma:
\begin{lemma}\label{lem:weightedMm}
The values $m_i$ and $M_i$ satisfy the following:
	\begin{enumerate}[(a)]
		\item $m_1<m_2<\dots<m_d=-1<M_1<M_2<\dots<M_d$\label{it:mIncr}
		\item $M_i-m_i$ is a constant independent of $i$\label{it:M-mConst}
		\item $M_1+(j-1)m_1\ge j$ for all $j=1,\dots,k$\label{it:Mm}
		\item $(j-1) m_{i+1}-m_{i}\ge j$ for all $j=1,\dots,k$ and $i=1,\dots,d-1$\label{it:mm}
	\end{enumerate}
\end{lemma}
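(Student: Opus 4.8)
The plan is to verify each of the four items directly from the closed-form definitions of $m_i$ and $M_i$, treating $k=2$ and $k\ge 3$ separately. All four are elementary once one has the right way to organize the algebra, so the main ``work'' is packaging rather than any genuine obstacle; the only place one must be a little careful is item \ref{it:mm}, where the geometric growth factor $(k-1)$ interacts with the index shift.

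\textbf{Items \ref{it:mIncr} and \ref{it:M-mConst}.} For $k\ge 3$, write $q:=k-1\ge 2$ and $C:=\tfrac{1}{k-2}$. Then $m_i=C\bigl(-2q^{\,d-i+1}+k\bigr)$ and $M_i=C\bigl(2kq^{d}-2q^{\,d-i+1}-k\bigr)$, so immediately $M_i-m_i=C\bigl(2kq^{d}-2k\bigr)=2k\cdot\tfrac{q^{d}-1}{k-2}$, independent of $i$; this is item \ref{it:M-mConst} (the $k=2$ case gives $M_i-m_i=2d$, again constant). For monotonicity, note $m_i$ is increasing in $i$ because $-q^{\,d-i+1}$ is increasing in $i$ (the exponent decreases, $q>1$), and likewise $M_i$ is increasing in $i$ for the same reason; combined with \ref{it:M-mConst} it then suffices to check the single ``crossing'' inequality $m_d=-1<M_1$. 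One computes $m_d=C(-2q+k)=C(-2(k-1)+k)=C(2-k)=-1$ exactly, and $M_1=C\bigl(2kq^{d}-2q^{d}-k\bigr)=C\bigl(2(k-1)q^{d}-k\bigr)>0$ since $q^{d}\ge q\ge 2$. The $k=2$ formulas $m_i=-2(d-i)-1$, $M_i=2(d+i)-1$ make all of \ref{it:mIncr} transparent ($m_d=-1<2d+1=M_1$).

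\textbf{Item \ref{it:Mm}.} Since $m_1<0$, the left-hand side $M_1+(j-1)m_1$ is decreasing in $j$, so it suffices to check $j=k$, i.e. $M_1+(k-1)m_1\ge k$. For $k\ge 3$ substitute: $M_1+(k-1)m_1=C\bigl(2kq^{d}-2q^{d}-k\bigr)+qC\bigl(-2q^{d}+k\bigr)=C\bigl(2kq^{d}-2q^{d}-k-2q^{d+1}+kq\bigr)$. Using $q^{d+1}=q\cdot q^{d}$ and $q=k-1$, the $q^{d}$-terms are $\bigl(2k-2-2(k-1)\bigr)q^{d}=0$, leaving $C(-k+kq)=C\cdot k(q-1)=C\cdot k(k-2)=k$. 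So in fact $M_1+(k-1)m_1=k$ with equality, which also confirms tightness of the constant. For $k=2$: $M_1+(j-1)m_1=(2d+1)+(j-1)(-2d+1)$, and at $j=2$ this is $(2d+1)-(2d-1)=2\ge 2$.

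\textbf{Item \ref{it:mm}.} Again the left-hand side $(j-1)m_{i+1}-m_i$ is decreasing in $j$ (as $m_{i+1}<0$), so it suffices to take $j=k$ and show $(k-1)m_{i+1}-m_i\ge k$ for $i=1,\dots,d-1$. For $k\ge 3$: $(k-1)m_{i+1}=qC\bigl(-2q^{\,d-i}+k\bigr)=C\bigl(-2q^{\,d-i+1}+kq\bigr)$ and $-m_i=C\bigl(2q^{\,d-i+1}-k\bigr)$, so the sum telescopes to $C\bigl(kq-k\bigr)=C\cdot k(q-1)=C\cdot k(k-2)=k$ — again equality, for every $i$. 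For $k=2$: $(j-1)m_{i+1}-m_i$ at $j=2$ equals $m_{i+1}-m_i=\bigl(-2(d-i-1)-1\bigr)-\bigl(-2(d-i)-1\bigr)=2\ge 2$. This completes all four items.

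In short, every inequality reduces to a one-line cancellation of the $q^{d}$ (resp. $q^{\,d-i+1}$) terms and the identity $k(k-2)\cdot\tfrac{1}{k-2}=k$; there is no real obstacle, but one should present \ref{it:Mm} and \ref{it:mm} via the ``worst case $j=k$'' reduction to avoid carrying the parameter $j$ through the computation.
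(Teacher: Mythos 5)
Your proof is correct and follows essentially the same route as the paper: observe that items \ref{it:Mm} and \ref{it:mm} are linear in $j$ with negative slope (since $m_1,m_{i+1}<0$), reduce to $j=k$, and then verify by direct cancellation that both hold with equality there; items \ref{it:mIncr} and \ref{it:M-mConst} are routine checks that the paper simply declares straightforward. One trivial slip worth correcting: for $k=2$ you wrote $M_i-m_i=2d$, but in fact $M_i-m_i=\bigl(2(d+i)-1\bigr)-\bigl(-2(d-i)-1\bigr)=4d$; this does not affect the argument since all that matters is independence of $i$.
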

\begin{proof}
	Properties \ref{it:mIncr} and \ref{it:M-mConst} are straightforward to check. For properties \ref{it:Mm} and \ref{it:mm}, since $m_1<m_{i+1}<0$ it suffices to show these for the case $j=k$. For $k=2$, they are easily verified. For $k\ge 3$, we have
	\begin{align*}
	M_1+(k-1)m_1&=\frac{2k(k-1)^d-2(k-1)^{d}-k}{k-2} + (k-1)\frac{-2(k-1)^{d}+k}{k-2}\\
	&= \frac{k^2-2k}{k-2}=k\\
	(k-1)m_{i+1}-m_{i}&=(k-1)\frac{-2(k-1)^{d-i}+k}{k-2} - \frac{-2(k-1)^{d-i+1}+k}{k-2}\\
	&= \frac{k^2-2k}{k-2}=k\qedhere
	\end{align*}
\end{proof}

Recall that we break the long edges of the original tree into short edges of unit length. Requests arrive only in the subset of $V$ that are endpoints of long edges. For a node $u\in V$ (which might lie in the middle of a long edge), define its depth $d_u$ to be be the minimal number of long edges of a path of long edges that starts at the root $\rt$ and includes $u$. We will construct a dual solution that satisfies constraint \eqref{eq:AltGrow}, but instead of \eqref{eq:slope} it will only satisfy $A_{ut}-A_{p(u)t}\in[m_{d_u},M_{d_u}]$. We use again the terminology and notation from before. To satisfy these (relaxed) constraints, we impose the following two rules on the extended dual solution that we will be constructing:
\begin{enumerate}[(i)]
	\item $\Delta A_{u} \ge 0$ for all $u\in V$: Altitudes can only decrease in reverse-time.
	\item $A_{u}-A_{p(u)} \in[m_{d_u},M_{d_u}]$ for all $u\ne \rt$ at all times: Adjacent altitudes are not too different.
\end{enumerate}
 We will show for all $t\in T_s\cup T_r$ that
\begin{align}
\cost_t\le D_t,\label{eq:weightedMain}
\end{align}
where $\cost_t$ is the movement cost by \DC to serve the $t$th request.

\paragraph{The competitive ratio:} A feasible dual solution can then be obtained by dividing all altitudes by
\begin{align*}
c:=\max\{M_i,|m_i|\colon i=1,\dots,d\} = M_d=\begin{cases}
4d-1&\text{if }k=2\\
\frac{2k(k-1)^d-3k+2}{k-2}=O(k^d)&\text{if }k\ge 3.
\end{cases},
\end{align*}
As this also divides the dual objective value by $c$, it implies that \DC is $c$-competitive, proving Theorem \ref{thm:weightedacc}.

\paragraph{Constructing a dual solution satisfying \eqref{eq:weightedMain}:}
As before, we proceed by induction backwards in time, and divide into several cases. We start with some arbitrary fixed altitude $A$ for each vertex at the time after the final request.

\paragraph{A relocation request, $t\in T_r$:}
We keep all altitudes unchanged. Observe that \eqref{eq:weightedMain} is satisfied with both sides equal to $0$.

\paragraph{A simple request, $t\in T_s$:}
We break the movement again into small steps where \DC's servers move by distance $1$ and employ the earlier notation. For a given step, the cost of \DC is $|U|+|B|$. Thus, we obtain \eqref{eq:weightedMain} if we can show for the step that
\begin{align*}
|U| + |B|  &\le \Delta D.
\end{align*}

\paragraph{Case 1: $B=\emptyset$:}
Let $\delta:=\min_{i\in U} M_{d_{v_i'}}-A_{v_i'}+A_{p(v_i')}$ and let the minimum be achieved for $i^*\in U$. By rule \ref{it:r2} of the induction hypothesis, we have $\delta\ge 0$. Therefore, the following reverse-time update of altitudes satisfies rules~\ref{it:r1} and \ref{it:r2}:
\begin{align*}
A_{u}' & := A_u - \delta &&\text{for } u \in V \setminus \cup_{i\in U}V_{v_i}\\
A_{u}' & := A_u  &&\text{for } u \notin V \setminus \cup_{i\in U}V_{v_i}
\end{align*}
We have
\begin{align}
\Delta D & = \Delta A_{s_t}-\sum_{i\in U} \Delta A_{i} \nonumber \\
&= \delta+\sum_{i\in U}\left(A_{v_i'} - A_{p(v_i')}\right) \label{eqw0}\\
& \ge M_{d_{v_{i^*}'}}+\sum_{i\in U\setminus\{i^*\}}m_{d_{v_i}} \label{ineqw1}\\
&\ge M_1+(|U|-1)m_1 \label{ineqw2}\\
&\ge |U|=|U|+|B|. \label{ineqw3}
\end{align}
Equation \eqref{eqw0} follows since the request location $s_t$ lies in the component where altitudes are changed by $\delta$ and the updated server positions $v_i'$ are all outside of it.
Inequality \eqref{ineqw1} follows from rule~\ref{it:r2} of the induction hypothesis.
Inequalities \eqref{ineqw2} and \eqref{ineqw3} follow by Lemma~\ref{lem:weightedMm}.

\paragraph{Case 2: $B=\{j\}$:} Let $\delta\ge 0$ be the minimum of the set $\{A_{v_j}-A_{p(v_j)}-m_{d_{v_j}}\}\cup\{M_{d_{v_i'}}-A_{v_i'}+A_{p(v_i')}\mid i\in U\}$. We modify the altitudes as follows.
\begin{align*}
A_{u}' & := A_u - \delta &&\text{for } u \in V_{v_j} \setminus \cup_{i\in U}V_{v_i}\\
A_{u}' & := A_u  &&\text{for } u \notin V_{v_j} \setminus \cup_{i\in U}V_{v_i}
\end{align*}
Again, rules \ref{it:r1} and \ref{it:r2} are obeyed.
We have
\begin{align}\allowdisplaybreaks
\Delta D & = \Delta A_{s_t}-\Delta A_{j} - \sum_{i\in U} \Delta A_{i}  \nonumber\\
& = \delta + A_{p(v_j)}-A_{v_j} + \sum_{i\in U}\left(A_{v_i'} - A_{p(v_i')}\right) \nonumber\\
&= \begin{cases}
-m_{d_{v_j}}+ \sum_{i\in U}\left(A_{v_i'} - A_{p(v_i')}\right)\qquad&\text{if }\delta=A_{v_j}-A_{p(v_j)}-m_{d_{v_j}} \nonumber\\
M_{d_{v_{i^*}}}+A_{p(v_j)}-A_{v_j} + \sum_{i\in U\setminus \{i^*\}}\left(A_{v_i'} - A_{p(v_i')}\right)\qquad&\text{if $\delta=M_{d_{v_{i^*}'}}-A_{v_i'}+A_{p(v_{i^*}')}$ for $i^*\in U$} \nonumber
\end{cases} \nonumber\\
&\ge \begin{cases}
-m_{d_{v_j}}+ \sum_{i\in U}m_{d_{v_i'}}\nonumber\\
M_{d_{v_{i^*}}} - M_{d_{v_j}} + \sum_{i\in U\setminus \{i^*\}}m_{d_{v_i}} \nonumber
\end{cases} \nonumber\\
&\ge \begin{cases}
-m_{d_{v_j}}+ |U|m_{d_{v_j}+1} \\
M_{d_{v_j}+1} - M_{d_{v_j}} + (|U|-1)m_{d_{v_j}+1}
\end{cases}  \label{eqStar}\\
&=-m_{d_{v_j}}+ |U|m_{d_{v_j}+1}\label{eq2w1}\\
&\ge |U|+1=|U|+|B|. \label{ieq2w1}
\end{align}
Inequality \eqref{eqStar} follows by Lemma~\ref{lem:weightedMm}\ref{it:mIncr} and since $d_{v_i'}\ge d_{v_j}+1$ for all $i\in U$.
Equation \eqref{eq2w1} follows by Lemma~\ref{lem:weightedMm}\ref{it:M-mConst}. Finally, inequality \eqref{ieq2w1} follows by Lemma~\ref{lem:weightedMm}\ref{it:mm} if $d_{v_j}<d$; if $d_{v_j}=d$, then $|U|=0$ and the inequality holds due to Lemma~\ref{lem:weightedMm}\ref{it:mIncr}, which states that $m_d=-1$. This concludes the proof of inequality \eqref{eq:weightedMain}, and thereby Theorem~\ref{thm:weightedacc}. 
\section{Lower Bounds}\label{sec:Lb}

In this section we show lower bounds on the competitive ratio of \DC, proving Theorem \ref{thm:lowerbound}.

\subsection{\texorpdfstring{Lower Bound for Depth $d$ Trees}{Lower Bound for Depth d Trees}}
We will prove the following theorem, implying part \ref{it:ThmLbTree} of Theorem \ref{thm:lowerbound}.
\begin{theorem}\label{thm:wTreeLbPrecise}
The competitive ratio of \DC for the $k$-taxi problem on unweighted tree metrics of depth $d$ is at least
\begin{align*}
4\sum_{h=1}^{d-1}\binom{k+h-2}{h}+2\binom{k+d-2}{d}+1.
\end{align*}
\end{theorem}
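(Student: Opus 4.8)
The plan is to construct an adversarial request sequence on the unweighted rooted tree of depth $d$ in which the branching factor is large enough (say, each internal vertex has at least $k$ children), and then play a recursive strategy on subtrees. The intuition is that the quantity $\binom{k+h-2}{h}$ counts multisets of size $h$ from a $(k-1)$-element set, which matches the number of ways the servers of \DC can be distributed relative to a fixed root-to-leaf path; so the lower bound should arise from a potential/weight argument over such server configurations. Concretely, I would define a recursive adversary $\mathcal A_\ell$ that operates inside a subtree of combinatorial depth $\ell$: it repeatedly picks a child subtree not currently "saturated" with \DC servers and recursively invokes $\mathcal A_{\ell-1}$ there, forcing \DC to send a server down, and uses relocation requests (which are free for both \DC and the offline optimum) to reset configurations so the adversary can keep the offline cost bounded while \DC pays repeatedly.

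The key steps, in order, would be: (1) Fix the tree $\TT_d$ to be the complete $k$-ary tree of combinatorial depth $d$ with unit-length edges, and describe the offline algorithm OPT that keeps all $k$ servers at or near a single leaf, moving them only along short edges when forced to, so that OPT's total cost is bounded by a constant (independent of the length of the request sequence) times the number of "phases." (2) Define the recursive adversary and a matching recursive cost-accounting: let $L_\ell$ denote the \DC cost incurred during one invocation of $\mathcal A_\ell$ inside a depth-$\ell$ subtree, conditioned on the relative server configuration at the start, and establish a recurrence of the form $L_\ell \ge (\text{base term}) + (\text{number of child-recursions}) \cdot L_{\ell-1}$, where the number of child recursions is controlled by how many servers \DC has pulled up and the combinatorial structure forces a worst case over the $\binom{k-1}{\cdot}$-many or $\binom{k+\ell-2}{\ell}$-many configurations. (3) Solve the recurrence; the solution should telescope into the claimed sum $4\sum_{h=1}^{d-1}\binom{k+h-2}{h}+2\binom{k+d-2}{d}+1$, with the coefficient $4$ coming from counting both upward and downward movement of \DC (recall on general trees these are not within a constant, unlike HSTs), the coefficient $2$ at the top level reflecting that no further recursion and no "return trip" happens there, and the $+1$ being the final single step that reaches the request. (4) Verify that relocation requests can indeed be used to "recharge" the adversary's configuration at zero cost to both OPT and \DC, so the sequence can be iterated arbitrarily many times, making OPT's amortized cost per phase negligible and pushing the ratio arbitrarily close to the claimed value.

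The main obstacle I anticipate is step (2): correctly identifying which server configuration of \DC (relative to the current root-to-leaf path being attacked) is the worst case, and showing the adversary can actually \emph{drive} \DC into that configuration rather than merely hoping for it. Because \DC moves \emph{all} unobstructed servers simultaneously, a naive request inside one child subtree may pull up servers from sibling subtrees in ways that help \DC later; the adversary must carefully sequence requests (and relocations) so that after each child-recursion, the pulled-up servers are "parked" in positions that force yet another expensive recursion, and so that the count of forced recursions summed over a phase equals the number of configurations counted by $\binom{k+\ell-2}{\ell}$. Getting the bookkeeping of upward versus downward \DC movement exactly right — so the factor of $4$ (rather than $2$ or $3$) emerges at every level except the top — will require a careful case analysis of a single step of \DC on the tree, analogous to (but in the reverse direction of) the case analysis used for the upper bound in Section~\ref{sec:generaltree}.
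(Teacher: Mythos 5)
Your high-level plan (recursive adversary on a $k$-ary unit-edge tree, relocation requests to reset configurations for free, multiset intuition for $\binom{k+h-2}{h}$) is in the right ballpark, but there is a genuine gap exactly where you flag it: step (2). The paper's proof does not merely ``hope'' to drive \DC into a worst-case configuration; it \emph{defines} and then explicitly maintains two precise invariant structures, and these are the heart of the argument, missing from your outline. The first is the notion of a \emph{$j$-match around $(x,y)$}: a joint online/offline situation where $j$ online servers coincide with $j$ offline servers, a further online server sits at $x$ with no online server below $x$, and a further offline server sits at a child $y$ of $x$. Lemma~\ref{lem:wTreeLbjMatch} shows by induction on the height of $y$ that one can transform such a situation, at offline cost $0$ and online cost exactly $2\binom{j+h}{h}-1$, into the same situation with the single online server moved from $x$ down to $y$. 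The crucial mechanism, which directly resolves the worry you raise about \DC moving all unobstructed servers, is that a single request at $y$ pulls all $j$ matched servers up from children $z_1,\dots,z_j$ of $y$ to $y$, thereby \emph{creating $j$ new sub-problems} ($1$-match around $(y,z_i)$ for each $i$), and the recursion fixes them in sequence while the ``$\ell$'' of the match grows; the adversary never has to hope for a bad \DC configuration, it manufactures it.

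The second missing structure is the chain of \emph{$(h,\uparrow)$} and \emph{$(h,\downarrow)$} situations (one online server at height $h$, its offline partner one step above, resp.\ below, all other servers matched) and Lemma~\ref{lem:wTreeLbUpDown}, which moves along the chain $(0,\uparrow)\to(1,\uparrow)\to\dots\to(d-1,\uparrow)\to(d,\downarrow)\to(d-1,\downarrow)\to\dots\to(1,\downarrow)$, each step costing $b_h=2\binom{k+h-1}{h+1}$ to \DC and $0$ to OPT. This also corrects your reading of the constants: the factor $4$ does \emph{not} arise from separately charging upward versus downward server motion; it is $2\times 2$, one factor of $2$ coming from the $2\binom{j+h}{h}-1$ inside a single $j$-match repair, and the other because each $b_h$ for $h\le d-2$ is paid twice — once on the way up the chain and once on the way down. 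Likewise, your plan to iterate ``arbitrarily many times'' and amortize OPT's cost is heavier than needed: the paper's single pass through the chain already costs OPT exactly $1$ and returns the two algorithms to identical configurations, so the ratio is read off from one phase (and the sequence can then be repeated to kill the additive constant). Without the $j$-match lemma and the $(h,\uparrow)/(h,\downarrow)$ chain, your recurrence $L_\ell \ge (\text{base})+(\text{recursions})\cdot L_{\ell-1}$ remains a template, not a proof: you haven't identified what state the recursion preserves, why the number of recursive calls is forced to be what you need, or why the offline cost per call is exactly zero rather than merely small.
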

Notice that for $d=1$, the lower bound is equal to $2k-1$, and for $k=2$ it is equal to $4d-1$, and both of these cases match exactly our upper bound from Theorem~\ref{thm:weightedacc}. For constant $d$, the lower bound is at least $\Omega(k^d)$ as $k\to\infty$, matching our upper bound for weighted trees up to a constant depending on $d$.

We prove the lower bound on the $k$-ary tree of depth $d$, where each edge has length $1$. We call a pair of online and offline configurations a \emph{situation}. Consider a situation with the following properties:
\begin{itemize}
	\item The location of $j$ online servers matches that of $j$ offline servers.
	\item Of the remaining online servers, at least one is located at a vertex $x$, and none is in the subtree below $x$.
	\item Of the remaining offline servers, at least one is located at a child $y$ of $x$.
\end{itemize}
We call such a situation a \emph{$j$-match around $(x,y)$}.

\begin{lemma}\label{lem:wTreeLbjMatch}
Suppose the current situation is a $j$-match around $(x,y)$ and let $h$ be the height of $y$. There exists a request sequence on which \DC suffers cost $2\binom{j+h}{h}-1$, the offline algorithm suffers cost $0$ and the resulting situation differs from the original one only in that there is one online server less at $x$ and instead there is now an online server at $y$.
\end{lemma}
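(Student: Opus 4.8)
The plan is to prove Lemma~\ref{lem:wTreeLbjMatch} by induction on the height $h$ of the vertex $y$. The base case $h=0$ (so $y$ is a leaf) should be handled directly: starting from a $j$-match around $(x,y)$, issuing a single simple request at $y$ forces \DC to bring a server to $y$. Here $\binom{j+0}{0}=1$, so the claimed cost is $2\cdot 1 - 1 = 1$; this should follow because \DC moves its unobstructed server from $x$ down the single edge to $y$ (cost~$1$), while the offline algorithm already has a server at $y$ and pays nothing. One must check that the resulting situation is again of the promised form: the online server that was at $x$ is now at $y$, the $j$ matched pairs are untouched, and no new obstruction was created below $x$.

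\textbf{Inductive step.} Assume the statement for height $h-1$ and consider a $j$-match around $(x,y)$ with $y$ of height $h$. The idea is to repeatedly descend into the subtree below $y$. Observe that the offline server at $y$ can be moved (via a relocation request, at zero offline cost — but we must be careful, since the problem allows relocation only in the restricted form $s_t=d_{t-1}$; more robustly, we use ordinary requests and let the offline algorithm route its already-present server for free) into any child $y'$ of $y$, turning the configuration into a $(j\!+\!?)$-match around $(y,y')$ — the key bookkeeping point being how the match count grows. After \DC's server reaches $y$ and the offline server sits at a child $y'$, we have a $(j{+}1)$-match around $(y,y')$ of height $h-1$ (the new matched pair being the online and offline servers co-located at... ) — here I would need to verify exactly which servers become matched. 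Applying the inductive hypothesis to each of the $k$ children of $y$ in succession, with the match parameter increasing by one each time we "lock in" a child, yields a recursion of the shape $f(j,h) = \sum_{\text{children}} \big(\text{cost to set up}\big) + \big(\text{cost from IH at height } h-1\big)$, and the arithmetic should telescope to $2\binom{j+h}{h}-1$ using the hockey-stick identity $\sum_{j'} \binom{j'+h-1}{h-1} = \binom{j+h}{h}$ type relation together with the Pascal recurrence $\binom{j+h}{h} = \binom{j+h-1}{h} + \binom{j+h-1}{h-1}$.

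\textbf{Main obstacle.} I expect the delicate part to be the precise accounting of how the match parameter $j$ evolves and making sure that after processing one child we can legitimately reapply the hypothesis with a cleanly incremented parameter — in particular, verifying the three defining conditions of a $j$-match (matched pairs, an online server at the "corner" vertex with empty subtree below, an offline server at a child) are re-established after each sub-phase, and that \DC's behaviour is exactly as the potential-free combinatorial description demands (no extra servers wander into the critical subtree, the server that descends is genuinely unobstructed, etc.). A secondary subtlety is that the cost we charge \DC should be a \emph{lower} bound on what \DC actually pays, so whenever \DC has a choice (e.g.\ which of two co-located servers is "unobstructed"), we must argue the adversary's request sequence forces the claimed cost regardless. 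Once the recursion $f(j,h) = k\cdot(\text{something}) + \dots$ is set up correctly, identifying it with $2\binom{j+h}{h}-1$ is a routine induction, so the combinatorial identity itself is not the hard part — pinning down the induction invariant is.
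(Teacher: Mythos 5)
Your base case is fine and agrees with the paper. The inductive step, however, is missing the one mechanism that makes the recursion actually produce $2\binom{j+h}{h}-1$, and the specific mechanism you sketch does not work.

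You propose moving the offline server from $y$ into a child $y'$ and claim this yields a ``$(j{+}1)$-match around $(y,y')$'' of height $h-1$. But no new matched pair is created: after \DC's server arrives at $y$, if the offline server sits at $y'$ the two are not co-located, so the match count is still $j$; if instead you relocate the online server to $y'$ as well, the pair becomes matched, but then there is no unmatched offline server at a child of the new anchor vertex and no unmatched online server at the anchor with an empty subtree below it, so the $j$-match invariant is broken and the hypothesis cannot be re-applied. Either way, the recursion you outline reduces to (roughly) ``pay $1$ and recurse on a $j$-match at height $h-1$,'' which gives about $2\binom{j+h-1}{h-1}$, strictly less than the claimed $2\binom{j+h}{h}-1$. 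You also implicitly charge \DC cost $1$ for the request at $y$, which is an underestimate of what the paper's construction needs.

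The missing idea is the following. Before requesting $y$, relocate the $j$ already-matched pairs to $j$ distinct children $z_1,\dots,z_j$ of $y$; this is free for both players because at each matched location online and offline have a co-located server. Now request $y$. \DC moves all $j$ matched online servers up from the $z_\ell$ to $y$ and the server at $x$ down to $y$, incurring cost $j+1$; the offline servers stay at $y$ and $z_1,\dots,z_j$ and pay nothing. At this moment exactly one pair is matched (at $y$), $j$ unmatched online servers sit at $y$, $j$ unmatched offline servers sit at the $z_i$, so for every $i$ the configuration is a $1$-match around $(y,z_i)$, and crucially the $z_i$ have height $h-1$. Apply the inductive hypothesis to $(y,z_1),\dots,(y,z_j)$ in order: after handling $(y,z_\ell)$, the online server pulled down to $z_\ell$ becomes matched with the offline server already there, so the $\ell$-th invocation is applied to an $\ell$-match, costing $2\binom{\ell+h-1}{h-1}-1$. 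This is exactly how the match parameter grows. Summing,
\[
j+1+\sum_{\ell=1}^j\left(2\binom{\ell+h-1}{h-1}-1\right)=1+2\sum_{\ell=1}^j\binom{\ell+h-1}{h-1}=2\binom{j+h}{h}-1,
\]
which is where the hockey-stick identity comes in. Finally, relocate the $j$ matched pairs back to their original positions (again free); the net change is precisely one online server moved from $x$ to $y$, as the lemma requires. So the high-level plan (induction on $h$, increasing match parameter, hockey-stick) is right, but the ``push all $j{+}1$ online servers up to $y$, then pull them back down one at a time with growing match count'' step is what your proposal lacks.
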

\begin{proof}
	We proceed by induction on $h$.
	
	If $h=0$, we issue a single simple request at $y$. Upon this request, \DC moves a single server from $x$ to $y$ for cost $1=2\binom{j}{0}-1$. The request is free for the offline algorithm and the difference between the resulting and original situation is as desired.
	
	\begin{figure}
		\begin{center} 
			\includegraphics{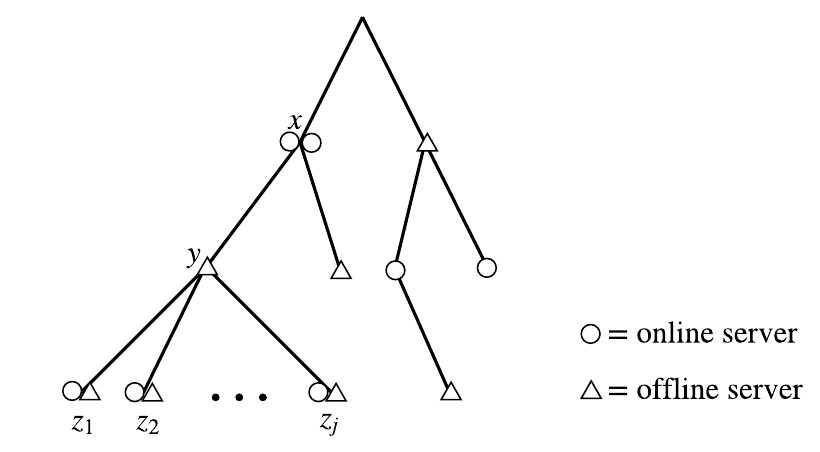}
			\caption{A $j$-match around $(x,y)$ (not all edges depicted).}
			\label{fig:jmatch}
		\end{center}
	\end{figure}
	
	If $h\ge 1$, we can assume (by issuing some relocation requests at the beginning and end of the request sequence) that the $j$ pairs of matching online and offline servers are located at $j$ children $z_1,\dots,z_j$ of $y$ (see Figure~\ref{fig:jmatch}). We first issue a request at $y$, which moves the $j$ online servers from $z_1,\dots,z_j$ up to $y$ and one online server from $x$ down to $y$, overall incurring cost $j+1$. Notice that the new situation is a $1$-match around $(y,z_i)$ for each $i=1,\dots,j$. We apply the induction hypothesis $j$ times; we will maintain the invariant that before the $\ell$th application, we are in an $\ell$-match around $(y,z_i)$ for each $i=\ell,\dots,j$. By applying the $\ell$th application to the $\ell$-match around $(y,z_\ell)$, the invariant is indeed maintained. After the last application of the induction hypothesis, we have a situation that differs from the original one in that an online server got removed from $x$ and added to $y$, as desired. The offline cost of the sequence is $0$ and the cost of \DC is
	\begin{align*}
	j+1+\sum_{\ell=1}^j\left[2\binom{\ell+h-1}{h-1}-1\right]&= 1+2\sum_{\ell=1}^j\binom{\ell+h-1}{h-1}=2\binom{j+h}{h}-1,
	\end{align*}
	where the last equation follows from the identity
	\begin{align*}
	\sum_{\ell=0}^j\binom{\ell+h-1}{h-1}=\binom{j+h}{h}.&\qedhere
	\end{align*}
\end{proof}

We call a situation \emph{$(h,\uparrow)$-situation} (resp. \emph{$(h,\downarrow)$-situation}) if the location of $k-1$ online servers matches that of $k-1$ offline servers, the last online server is at a node $x$ at height $h$ and the last offline server is at the parent (resp. a child) of $x$. We say a \emph{situation transforms to another situation at cost $c$} if there exists a request sequence that leads from the first to the second situation, incurring cost $c$ for \DC and cost $0$ for the offline algorithm.

\begin{lemma}\label{lem:wTreeLbUpDown}
	Let $b_h=2\binom{k+h-1}{h+1}$.
	\begin{enumerate}[(a)]
		\item For $h=0,\dots,d-2$, any $(h,\uparrow)$-situation transforms to a $(h+1,\uparrow)$-situation at cost $b_h$.
		\item Any $(d-1,\uparrow)$-situation transforms to to a $(d,\downarrow)$-situation at cost $b_{d-1}$.
		\item For $h=2,\dots,d$, any $(h,\downarrow)$-situation transforms to a $(h-1,\downarrow)$-situation at cost $b_{h-2}$.\label{it:wTreeCaseDown}
	\end{enumerate}
\end{lemma}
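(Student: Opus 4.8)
The plan is to prove Lemma~\ref{lem:wTreeLbUpDown} by constructing explicit request sequences for each of the three transformations, using Lemma~\ref{lem:wTreeLbjMatch} as the workhorse. The common theme is that a $(\cdot,\uparrow)$- or $(\cdot,\downarrow)$-situation is, after issuing some free relocation requests to park the $k-1$ matched server pairs conveniently, an instance of a $(k-1)$-match (or a closely related configuration) to which we can apply the $j$-match lemma with $j=k-1$.

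\textbf{Part (a).} Starting from an $(h,\uparrow)$-situation, the lone online server sits at some node $x$ at height $h$ and the lone offline server is at the parent $p(x)$, which has height $h+1$. First I would issue relocation requests so that the $k-1$ matched pairs are located at $k-1$ children of $p(x)$ other than $x$; call them $w_1,\dots,w_{k-1}$ (the tree is $k$-ary so $p(x)$ has exactly $k$ children and one of them is $x$). Now I issue a simple request at $p(x)$: \DC moves the $k-1$ servers at $w_1,\dots,w_{k-1}$ up to $p(x)$ and moves the server at $x$ up to $p(x)$, paying $k$; the offline server is already at $p(x)$, so offline pays $0$. The resulting situation is, for each $i$, a $1$-match around $(p(x),w_i)$ — wait, we actually want to end at a $(h+1,\uparrow)$-situation, so the lone online server should end up at $p(x)$'s parent. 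So instead: after moving everything to $p(x)$, the situation is a $(k-1)$-match around $(p(x),w)$ for an appropriate child; but rather, to push the online server up one more level I apply Lemma~\ref{lem:wTreeLbjMatch} cleverly. Concretely: after the request at $p(x)$, relocate the matched pairs back down to the children of $p(x)$, so the configuration becomes a $(k-1)$-match around $(p(x), c)$ where $c$ is a child of $p(x)$ at height $h$; apply Lemma~\ref{lem:wTreeLbjMatch} to move one online server from $p(x)$ down to $c$ — but that is the wrong direction. The cleaner route is: a $(h,\uparrow)$-situation is exactly a $(k-1)$-match around $(p(p(x)), p(x))$ once we relocate the matched pairs under $p(p(x))$; here $p(x)$ has height $h+1$ so the $j$-match lemma with $j=k-1$ and $y=p(x)$ costs $2\binom{k-1+(h+1)}{h+1}-1 = 2\binom{k+h}{h+1}-1$. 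Hmm, that is $b_h - 1 + 1$? We have $b_h = 2\binom{k+h-1}{h+1}$, which does not match. So the correct decomposition must combine one ``free'' step (the request that moves the online server up a level and brings $k-1$ matched servers up with it, then parks them) with $k-1$ applications of Lemma~\ref{lem:wTreeLbjMatch} at one lower height, mirroring the recursion inside the proof of Lemma~\ref{lem:wTreeLbjMatch} itself. I would carry out exactly that: one request costing $k$ (move $k-1$ pairs and the lone server to the common ancestor), then for $\ell=1,\dots,k-1$ apply the $j$-match lemma at height $h-1$ with $j=\ell$, maintaining the invariant that before the $\ell$th application the remaining unpushed pairs form $\ell$-matches; the total is $k + \sum_{\ell=1}^{k-1}(2\binom{\ell+h-1}{h-1}-1) = 1 + 2\sum_{\ell=1}^{k-1}\binom{\ell+h-1}{h-1} = 2\binom{k+h-1}{h} - 1$, and a small hockey-stick rearrangement together with the Pascal identity $\binom{k+h-1}{h} = \binom{k+h-1}{h+1}\cdot\frac{h+1}{k-1} + \dots$ — I will instead just verify the closed form equals $b_h = 2\binom{k+h-1}{h+1}$ directly by the identity $\binom{k+h-1}{h}=\binom{k+h-1}{k-1}$ and checking small cases; this bookkeeping is routine.

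\textbf{Parts (b) and (c).} Part (b) is the special case of (a) at $h=d-1$, except that $p(x)$ is now the root (height $d$), which has $k$ children but no parent, so the online server cannot move ``up'' past the root; instead it ends at the root with the offline server at a child — i.e.\ a $(d,\downarrow)$-situation — and the same request sequence and count apply, giving cost $b_{d-1}$. Part (c) is the downward analogue: from a $(h,\downarrow)$-situation the online server is at a node $x$ of height $h$ and the offline server at a child $y$ of $x$ (height $h-1$); I relocate the $k-1$ matched pairs to children of $x$ other than $y$, issue a request at $y$ which pulls the online server from $x$ down to $y$ and leaves the $k-1$ pairs untouched (they are above or beside, off the path), then relocate to set up a $(k-1)$-match around $(y,\cdot)$ and recurse with $j$-matches at height $h-2$, for total $b_{h-2}$.

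\textbf{Main obstacle.} The conceptual content is light — all three parts are variations of one construction — so the real work is (i) getting the relocation bookkeeping exactly right so that at each stage the configuration genuinely is a valid $j$-match (in particular that no online server lies strictly below the relevant node, which Lemma~\ref{lem:wTreeLbjMatch} requires and which the downward case makes delicate), and (ii) verifying that the summation $k+\sum_{\ell=1}^{k-1}\bigl(2\binom{\ell+h-1}{h-1}-1\bigr)$ collapses to $2\binom{k+h-1}{h+1}$ via the hockey-stick identity $\sum_{\ell=0}^{k-1}\binom{\ell+h-1}{h-1}=\binom{k+h-1}{h}$ followed by $\binom{k+h-1}{h} = \binom{k+h-1}{h+1} + \binom{k+h-2}{h}$ — actually the clean statement is simply $2\binom{k+h-1}{h}-1$, and one checks $\binom{k+h-1}{h} = \binom{k+h-1}{k-1}$; I will state the identity and defer the arithmetic. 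I expect the indexing in the downward case (c), where the height decreases and one must be careful that $h\ge 2$ so that $y$ still has children to host the matched pairs, to be the fiddliest point, but nothing there is deep.
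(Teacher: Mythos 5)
The spirit is right — issue relocation requests, one request to the parent, then recurse via Lemma~\ref{lem:wTreeLbjMatch} — but the details of the construction in part (a) are wrong in a way that the paper's proof has to navigate carefully, and your cost count does not close.

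The crucial idea you are missing is where to park the $k-1$ matched pairs before the request at $y=p(x)$. You place all $k-1$ pairs at siblings of $x$, but then after the request at $y$ moves all $k$ online servers to $y$ and you push $k-1$ of them back down to the siblings, \emph{every} online server ends up matched, and there is no unmatched offline server at the parent of $y$ — you never produce a $(h+1,\uparrow)$-situation. The paper instead places exactly \emph{one} matched pair at the grandparent $z = p(y)$ and only the remaining $k-2$ matched pairs at siblings $x_1,\dots,x_{k-2}$ of $x$ (Figure~\ref{fig:hsituation}). After the request at $y$ the offline servers are at $z,y,x_1,\dots,x_{k-2}$, you do $k-2$ (not $k-1$) applications of Lemma~\ref{lem:wTreeLbjMatch}, each at height $h$ (the height of the siblings $x_i$, not $h-1$), and at the end the leftover online server at $y$ is unmatched while the leftover offline server at $z=p(y)$ is unmatched: a genuine $(h+1,\uparrow)$-situation. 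The cost is
\[
k+\sum_{\ell=1}^{k-2}\left[2\binom{\ell+h}{h}-1\right]=2\sum_{\ell=0}^{k-2}\binom{\ell+h}{h}=2\binom{k+h-1}{h+1}=b_h,
\]
which closes with the hockey-stick identity and nothing else.

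Your count is a genuine error, not routine bookkeeping. With $k-1$ applications at height $h-1$ you computed $2\binom{k+h-1}{h}-1$, and this is simply not equal to $b_h=2\binom{k+h-1}{h+1}$ (try $k=3$, $h=1$: you get $5$ versus $b_1=6$). The symmetry $\binom{k+h-1}{h}=\binom{k+h-1}{k-1}$ you invoke does not relate the lower index $h$ to $h+1$, so no Pascal/hockey-stick rearrangement repairs this; the construction itself has to change to use $k-2$ applications at height $h$ as above. Similarly in part (c), you first say to relocate the matched pairs to siblings of $y$ (children of $x$), but those are obstructed by the server at $x$ when $y$ is requested, so they don't move up and the subsequent $j$-match setup fails; the paper places them one level lower, at children $z_1,\dots,z_{k-1}$ of $y$ (height $h-2$), so that the request at $y$ pulls all $k$ servers to $y$ at cost $k$ and then $k-2$ applications at height $h-2$ give $b_{h-2}$. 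Your part (b) idea (take $z$ to be another child of the root instead of a parent, which does not exist) is the same as the paper's once (a) is fixed.
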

\begin{proof}
	\begin{enumerate}[(a)]
		\item Denote by $x$ the vertex where the unmatched online server is located and by $y$ the parent of $x$ where the unmatched offline server is located. We first issue some relocation requests, so that one of the matching server pairs is at the parent $z$ of $y$ and the other $k-2$ matching server pairs are at distinct siblings $x_1,\dots,x_{k-2}$ of $x$ (see Figure~\ref{fig:hsituation}, left) Now request $y$. This results in all online servers moving to $y$ for cost $k$. The offline servers are still at $z,y,x_1,\dots,x_{k-2}$. Notice that for each $i=1,\dots,k-2$, the current situation is a $1$-match around $(y,x_i)$. We will apply Lemma~\ref{lem:wTreeLbjMatch} $k-2$ times: Before the $\ell$th application, the current situation is an $\ell$-match around $(y,x_i)$ for all $i=\ell,\dots,k-2$. We can maintain this invariant by applying Lemma~\ref{lem:wTreeLbjMatch} to the $\ell$-match around $(y,x_\ell)$. After all these applications of Lemma~\ref{lem:wTreeLbjMatch}, there are two online servers at $y$ (one of them matching the offline server at $y$) and the others are matching the offline servers at $x_1,\dots,x_{k-2}$. Since $y$ is at height $h+1$, and the last offline server at the parent $z$ of $y$, we are now in a $(h+1,\uparrow)$-situation. The total online cost of the transformation is
		\begin{align*}
		k+\sum_{\ell=1}^{k-2}\left[2\binom{\ell+h}{h}-1\right] = 2\sum_{\ell=0}^{k-2}\binom{\ell+h}{h}=2\binom{k+h-1}{h+1}=b_h.
		\end{align*}
		\label{it:wTreeCaseUp}
		\item The proof is identical to case \ref{it:wTreeCaseUp} except that vertex $z$ is now a child from $y$ distinct from $x,x_1,\dots,x_{k-2}$. (Notice that $y$ is the root, since $x$ is at height $d-1$.)
		\item In this case, the vertex $y$ where the unmatched offline server is located is a child of the vertex $x$. We first issue some relocation requests so that the $k-1$ matched server pairs are at children $z_1,\dots,z_{k-1}$ of $y$ (see Figure~\ref{fig:hsituation}, right). Then we request $y$, forcing all online servers to move to $y$ at cost $k$. The new situation is a $1$-match around $(y,z_i)$ for each $i=1,\dots,k-1$. We now apply Lemma~\ref{lem:wTreeLbjMatch} $k-2$ times, similarly to before, and after all these applications we reach a $k-1$-match around $(y,z_{k-1})$. Since $y$ is at height $h-1$, this is also a $(h-1,\downarrow)$-situation. The total cost is obtained by the same calculation as in case~\ref{it:wTreeCaseUp}, with $h$ replaced by $h-2$ since the vertices $z_i$ are at height $h-2$.\qedhere
	\end{enumerate}
\end{proof}
\begin{figure}
\begin{center} 
	\includegraphics{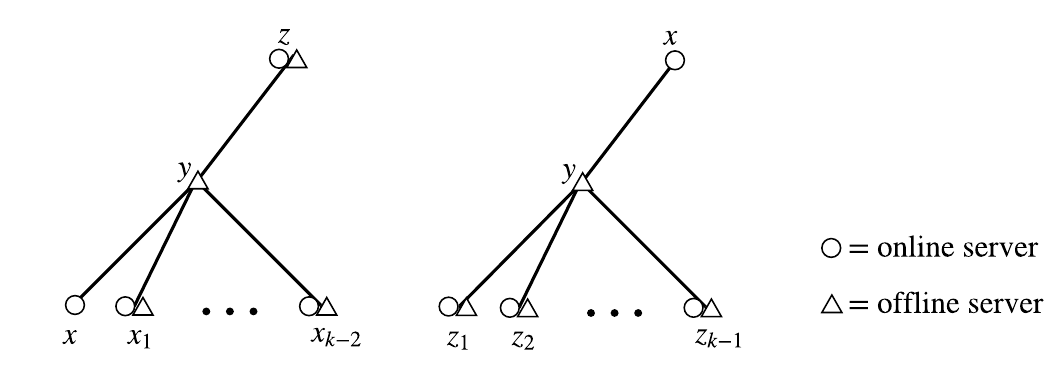}
	\caption{A $(h,\uparrow)$-situation (left) and $(h,\downarrow)$-situation (right), if $x$ is at height $h$.}
	\label{fig:hsituation}
\end{center}
\end{figure}

\begin{proof}[Proof of Theorem~\ref{thm:wTreeLbPrecise}]
	From a situation where \DC and the offline algorithm are in the same configuration, the offline algorithm can pay cost $1$ to reach a $(0,\uparrow)$-situation. Now we successively apply all cases of Lemma~\ref{lem:wTreeLbUpDown}, so that we eventually reach a $(h-1,\downarrow)$-situation. After one more request to the unmatched offline server, \DC pays an additional cost $1$, and the two algorithm are again in the same configuration. While the total cost of the offline algorithm was only $1$, the cost of \DC, and therefore a lower bound on its competitive ratio, is
	\begin{align*}
	2\sum_{h=0}^{d-2}b_h+b_{d-1}+1=4\sum_{h=1}^{d-1}\binom{k+h-2}{h}+2\binom{k+d-2}{d}+1.&\qedhere
	\end{align*}
\end{proof}

\subsection{Lower Bound for HSTs}

We now prove that our analysis on HSTs is exactly tight for any depth by giving a matching lower bound of $c_{kd}$ on the competitive ratio of \DC, which yields part \ref{it:ThmLbHst} of Theorem \ref{thm:lowerbound}. We remark that for $d=1$, the lower bound $c_{k1}=k$ already follows from the known lower bound on the $k$-server problem, and for $d\ge k$, the lower bound $c_{kd}=2^k-1$ follows from \cite{CK19}, where it was shown that even randomized algorithms against adaptive adversaries cannot achieve a better competitive ratio on HSTs of depth $d\ge k$.

For $\alpha\in\mathbb N$, let $T_{\alpha d}$ be an HST of depth $d$ with edge lengths $\alpha^{d-1},\alpha^{d-2},\dots,\alpha^0$ along each root-to-leaf path and where each internal vertex has sufficiently many (i,e., at least $k+1$) children. Let $W_{\alpha d}:=\sum_{h=0}^{d-1}\alpha^h=\frac{\alpha^{d}-1}{\alpha-1}$ be the distance from the root to any leaf. The lower bound in Theorem~\ref{thm:lowerbound} for HSTs follows from the following lemma by letting $\alpha\to\infty$.

\begin{lemma}
	For all $k\in\mathbb N_0$, $d\in\mathbb N_0$, $\alpha\in\mathbb N$, any initial configuration of $k$ servers in $T_{\alpha d}$ and any leaf $\ell$ of $T_{\alpha d}$, there exists a request sequence with the following properties when being served:
	\begin{enumerate}[(a)]
		\item The upwards movement cost of \DC is at least $(\alpha-1)^{d-1}c_{kd}$.\label{it:LbHstCost}
		\item The upwards movement cost of an optimal offline algorithm is at most $W_{\alpha d}$.\label{it:LbHstOpt}
		\item The cost of an algorithm with an additional $(k+1)$st server at $\ell$ is $0$.\label{it:LbHstFree}
		\item If \DC had an additional $(k+1)$st server sufficiently far away from the root on an extra edge incident to the root, this server would move distance $W_{\alpha d}$ towards the root.\label{it:LbHstPull}
	\end{enumerate}
\end{lemma}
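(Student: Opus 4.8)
The plan is to prove all four properties simultaneously by induction on the depth $d$, mirroring the recursion $c_{kd}=k+\sum_{i=0}^{k-1}c_{i,d-1}$ of Lemma~\ref{lem:recurrence}. The base case $d=0$ is trivial: the tree is a single leaf, every server is already at $\ell$, and the empty request sequence satisfies (a)--(d) with $c_{k0}=1$ interpreted via the convention that the bound $(\alpha-1)^{-1}c_{k0}$ is vacuous, or more cleanly one starts the induction at $d=1$, where a single simple request at $\ell$ forces at least one \DC server to traverse the last edge (cost $\ge 1 = (\alpha-1)^0 c_{k1}/k \cdot k$, up to the constant) while the offline server and the free $(k+1)$st server sit at $\ell$.

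For the inductive step, I would proceed as follows. Let $r$ be the root of $T_{\alpha d}$ and let $c_1,\dots$ be its children; write $\ell$ for the designated target leaf, lying below some child $c^\ast$. First issue a request at $\ell$: this pulls all \DC servers that are outside $V_{c^\ast}$ upward and then down into $V_{c^\ast}$ (and the hypothetical extra server on the root edge moves distance $W_{\alpha d}$ toward the root, giving property~(d) essentially for free at this top level, since the top edge has length $\alpha^{d-1}$ and the induction supplies the remaining $W_{\alpha(d-1)} = W_{\alpha d}-\alpha^{d-1}$). Now all $k$ servers are inside $V_{c^\ast}$, but I want to extract cost from \emph{every} child subtree. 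The key maneuver is to successively request leaves in the subtrees below $c_1, c_2, \ldots$ so that at the $i$-th stage there are exactly $i-1$ servers that have been ``parked'' one per subtree below $c_1,\dots,c_{i-1}$ and the remaining $k-(i-1)$ servers get dragged into $V_{c_i}$; applying the induction hypothesis for depth $d-1$ inside $V_{c_i}$ with $k-(i-1)$ servers yields upward cost at least $(\alpha-1)^{d-2} c_{k-(i-1),d-1}$ inside that subtree, plus the cost of $k-(i-1)$ servers crossing the length-$\alpha^{d-1}$ top edge. Summing the top-edge crossings over $i=1,\dots,k$ gives roughly $k\alpha^{d-1}$, i.e.\ a $k$-term, and summing the recursive contributions gives $(\alpha-1)^{d-2}\sum_{i=0}^{k-1} c_{i,d-1}$; combining via Lemma~\ref{lem:recurrence} and bookkeeping the $(\alpha-1)$ factors yields $(\alpha-1)^{d-1} c_{kd}$ for property~(a).

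Properties~(b) and~(c) are maintained alongside: for~(c), the offline/extra-server strategy keeps the $(k+1)$st server always sitting at the leaf currently being requested, so it never moves once positioned, and by re-using relocation requests (which are free) between stages we can shuffle the offline servers into whatever child subtree will be hit next without cost — the only genuine offline cost is the single initial move of $W_{\alpha d}$ to bring one offline server down to the first requested leaf, and inside each $V_{c_i}$ the induction bounds the offline cost by $W_{\alpha(d-1)}$, but crucially those are \emph{reused} (the offline algorithm walks its server along, never paying twice for the same edge going the same direction net), so the telescoping keeps the total at $W_{\alpha d}$ for~(b). Property~(d) for the deeper levels is exactly what the induction hands back when the extra server is placed on an extra edge at the root of each $V_{c_i}$.

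The main obstacle I anticipate is the careful orchestration of the ``parking'' argument: one must verify that after applying the depth-$(d-1)$ induction inside $V_{c_i}$, exactly one \DC server can be left behind in $V_{c_i}$ (so that the invariant ``$i$ servers parked in $c_1,\dots,c_i$'' is restored) and that the next request at a leaf below $c_{i+1}$ indeed drags precisely the other $k-i$ servers across the top edge — this requires controlling where \DC's servers end up, using the \DC movement rules and the Observation about $U$ and $B$, and it interacts delicately with the ``$\ell$'' parameter that the induction hypothesis lets us choose freely (we need to invoke the hypothesis for $V_{c_i}$ with the target leaf being the one that the subsequent parking step will need). Getting the $(\alpha-1)$ versus $\alpha$ powers to line up (the cost guarantee degrades by a factor $\frac{\alpha-1}{\alpha}$-ish per level, which is why the final bound has $(\alpha-1)^{d-1}$ rather than $\alpha^{d-1}$, and this is exactly why one takes $\alpha\to\infty$ afterward) is the routine-but-fiddly part, but the structural heart is the parking invariant.
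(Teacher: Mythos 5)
Your high-level plan (induct on $d$, pull servers across the top edge, invoke Lemma~\ref{lem:recurrence}) is on the right track, but there is a genuine gap in how the factor $(\alpha-1)^{d-1}$ is assembled, and the ``parking'' invariant you set up is structurally different from what the argument actually needs.

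The crucial missing idea is that the induction hypothesis must be applied $\alpha-1$ \emph{times} inside each child subtree, not once. You invoke the IH once per $V_{c_i}$, recording a contribution of $(\alpha-1)^{d-2}c_{\cdot,d-1}$; summing over the subtrees then gives $(\alpha-1)^{d-2}\sum_i c_{i,d-1}$, which is a full power of $(\alpha-1)$ short of the $(\alpha-1)^{d-1}\sum_i c_{i,d-1}$ that Lemma~\ref{lem:recurrence} requires, and the $k\alpha^{d-1}$ you account for from top-edge crossings is already consumed by the $k$-summand of the recurrence, so it cannot cover the shortfall. (Explicitly, you need $(\alpha-1)^{d-2}\sum \ge (\alpha-1)^{d-1}\sum$, which fails whenever $\alpha>2$.) The paper's fix is precisely why property~(d) is in the lemma: the recursive request sequence for $S_i$ is issued $\alpha-1$ consecutive times. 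Each repetition pulls the nearest external \DC server distance $W_{\alpha,d-1}$ toward the root of $S_i$, and since $(\alpha-1)W_{\alpha,d-1}=\alpha^{d-1}-1$ is strictly less than the length $\alpha^{d-1}$ of the top edge, that server never actually reaches $S_i$, so all $\alpha-1$ repetitions are genuinely on the subtree's $i$ internal servers and contribute $(\alpha-1)\cdot(\alpha-1)^{d-2}c_{i,d-1}=(\alpha-1)^{d-1}c_{i,d-1}$. Calling this a ``routine-but-fiddly'' bookkeeping step undersells it: this repetition, enabled by property~(d), is the mechanism that produces the extra $(\alpha-1)$, and without it the bound you'd prove is weaker by a factor of $\alpha-1$.

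There is also a structural mismatch in your invariant, which in turn undermines your treatment of (b) and (c). The paper maintains, at time $t_i$, exactly $i$ \emph{matched} online/offline pairs, with the $k-i$ unmatched online servers at the root of $T_{\alpha d}$ and the $k-i$ unmatched offline servers at the designated leaves $\ell_i,\dots,\ell_{k-1}$; it then relocates the $i$ matched pairs into $S_i$ and invokes the IH with those $i$ servers and target leaf $\ell_i$, where the offline already holds an extra server so that property~(c) of the IH yields offline cost $0$ for the recursive block. Your version instead parks one server per already-visited subtree and drags $k-(i-1)$ servers into $V_{c_i}$: there is no matching maintained, it is unclear where the offline servers sit so that the recursive block is free, and your argument for~(b) via ``net upward movement telescoping'' does not match how upward cost is actually charged (a server that goes up and comes back down still pays the upward cost). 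The clean route is the one the paper takes: the offline algorithm pays $W_{\alpha d}$ once at the very start to move a server from $\ell_k$ to $\ell$, and every subsequent recursive block is served for free by invoking property~(c) of the IH with the offline server already at the target leaf. Finally, a small correction on the base case: $c_{k0}=0$, since the sum defining $c_{kd}$ runs over $h=1,\dots,\min\{k,0\}$ and is empty; so the bound $(\alpha-1)^{-1}c_{k0}=0$ is not ``vacuous'' but is trivially met by the empty request sequence, exactly as the paper states.
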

\begin{proof}
	We proof the lemma by induction on $d$. For $d=0$, the empty request sequence trivially yields the result since $c_{k0}=W_{\alpha0}=0$.
	
	Consider now the case $d\ge 1$. Denote by $S_0,S_1,\dots,S_k$ different depth-$(d-1)$-subtrees, where $S_0$ is the one containing $\ell$. Notice that each $S_i$ is a copy of $T_{\alpha,d-1}$. We first issue relocation requests to ensure that for each $i=1,\dots,k$, there is a server at some leaf $\ell_i$ in $S_i$. We then issue a request at $\ell$. To serve this request, the offline algorithm moves its server from $\ell_k$ to $\ell$ for upwards movement cost $W_{\alpha d}$, and it will suffer no additional cost for the remainder of the request sequence. In particular, this will ensure that an algorithm with an additional server at $\ell$ would suffer $0$ cost, as required. \DC moves its $k$ servers to the root for cost $kW_{\alpha d}$ and then moves one server down to $\ell$ to serve the request. We will construct the remainder of the request sequence so that there are times $t_1<\dots<t_k$ such that at time $t_i$, the following holds:
	\begin{itemize}
		\item There are $i$ online servers whose positions match those of $i$ offline servers.
		\item The remaining $k-i$ online servers are at the root.
		\item The remaining $k-i$ offline servers are at $\ell_{i},\dots,\ell_{k-1}$.
	\end{itemize}
	Notice that initially, these properties are satisfied for $i=1$. For $i<k$, the requests between times $t_i$ and $t_{i+1}$ are as follows: First we issue $i$ relocation requests so that the $i$ matching server pairs are in $S_i$. Now we issue the request sequence from the induction hypothesis applied to the subtree $S_i$, with $i$ instead of $k$ servers and with $\ell$ replaced by the vertex $\ell_i$ where the extra offline server is located. By property \ref{it:LbHstFree} of the induction hypothesis, this incurs no additional cost for the offline algorithm. By property \ref{it:LbHstPull} of the induction hypothesis, it will cause one of the online servers from the root of $T_{\alpha d}$ to move towards the root of $S_i$ by distance $W_{\alpha,d-1}$. Thus, we can run the sequence from the induction hypothesis $\alpha-1$ times and the server will move distance $\alpha^d-1$ from the root of $T_{\alpha d}$ towards the root of $S_i$ and hence it still has not reached $S_i$. In the end, we request the $i+1$ offline server locations in $S_i$ repeatedly until \DC has a server at all these locations. Notice that the properties for time $t_{i+1}$ are now satisfied. This completes the description of the request sequence.
	
	The upwards movement cost of \DC during the $\alpha-1$ invocations of the induction hypothesis between times $t_i$ and $t_{i+1}$ is at least $(\alpha-1)^{d-1}c_{i,d-1}$ due to property \ref{it:LbHstCost} of the induction hypothesis. Thus, the upwards movement cost during the entire request sequence is at least
	\begin{align*}
		kW_{\alpha d}+(\alpha-1)^{d-1}\sum_{i=1}^{k-1}c_{i,d-1}\ge (\alpha-1)^{d-1}c_{kd},
	\end{align*}
	where we have used $W_{\alpha d}\ge (\alpha-1)^{d-1}$ and the recurrence from Lemma~\ref{lem:recurrence}. This proves property \ref{it:LbHstCost}. As announced, the offline algorithm does not incur any additional cost beyond moving from $\ell_{k}$ to $\ell$ in the beginning, and therefore \ref{it:LbHstOpt} and \ref{it:LbHstFree} are also satisfied. Property \ref{it:LbHstPull} holds because such an additional online server ``above the root'' would be pulled down only during the initial stage where the $k$ online servers move upwards by distance $W_{\alpha d}$ each.
\end{proof} 

\bibliographystyle{plain}
\bibliography{bibliography}

\appendix
\section{Limitation of Previous Techniques}\label{sec:limit}

\subsection{Forward-Time Primal Dual}\label{sec:noForward}
We show here that a forward-time construction of a dual solution would not have allowed us to obtain our results, because it would be unable to approximate the optimal value in general.

\begin{proof}[Proof of Theorem~\ref{thm:forward}]
	Consider the tree containing only two leaves that are at distance $1$ from the root. Consider an arbitrary deterministic online algorithm for the dual LP. (If the algorithm is randomized, then the same proof works by replacing altitudes with expected altitudes, so this is without loss of generality.) We will construct a request sequence for which the dual algorithm fails to achieve any positive objective value even though the optimal dual objective value (which is equal to the optimal objective value of the $k$-taxi problem) tends to infinity. Fix $k=1$, and denote the single \DC server by $i$. Note that this server is always located at the destination of the last request. The following proof easily extends to greater values of $k$ by relocating \emph{all} servers to the same leaf before issuing a simple request at the other leaf.\footnote{If one does not want to relocate several servers to the same vertex, one can also expand the two leaves to subtrees of small diameter and relocate the servers to the same subtree.}
	
	Consider the first time $t$ for which no request has been issued yet. Denote the two leaves by $u$ and $w$ such that $A_{u,t-1}\le A_{w,t-1}$.
	
	\paragraph{If the server is at $w$ just before time $t$:} We first issue a relocation request from $s_t=w$ to $d_t=u$ and then a simple request at $s_{t+1}=w$. At time $t$, the objective value changes by
	\begin{align*}
	D_t=-\Delta_t A_{u}=A_{u,t-1}-A_{ut},
	\end{align*}
	and at time $t+1$, it changes by
	\begin{align*}
	D_{t+1}=\Delta_{t+1} A_w - \Delta_{t+1} A_i = A_{w,t+1} - A_{wt} - A_{w,t+1} + A_{ut} = A_{ut}-A_{wt}.
	\end{align*}
	So overall, the objective value changes by
	\begin{align*}
	D_t+D_{t+1} = A_{u,t-1}-A_{wt}\le A_{u,t-1}-A_{w,t-1}\le 0,
	\end{align*}
	where the first inequality is due to constraint \eqref{eq:AltGrow} and the other inequality follows by definition of $u$ and $w$.
	
	\paragraph{If the server is at $u$ just before time $t$:}
	Then we issue a simple request at $s_t=w$. The objective value changes by
	\begin{align*}
	D_{t}=\Delta_{t} A_w - \Delta_{t} A_i = A_{wt} - A_{w,t-1} - A_{wt} + A_{u,t-1} = A_{u,t-1}-A_{w,t-1}\le 0.
	\end{align*}
	
	Thus, the dual objective value never increases, but any $1$-taxi algorithm has to pay a constant movement cost for each simple request.
\end{proof}

Notice that the request at time $t$ is chosen based on the altitudes at time $t-1$. When altitudes are constructed backwards in time, such an adversarial request sequence would not be well-defined.

\subsection{Matching Potential}\label{sec:purePotential}
Coester and Koutsoupias~\cite{CK19} gave a \emph{randomized} algorithm for the $k$-taxi problem on HSTs which also achieves competitive ratio $2^k-1$. The proof of competitiveness of this algorithm is given by a potential function argument, where the potential function is $2^k-1$ times the value of a minimum matching between the online and offline servers. In a sense, the algorithm can be viewed as the randomized analogue of \DC. Therefore, it is unsurprising that the same potential can be used to prove the same competitive ratio for \DC when $k=2$, as stated in~\cite{CK19}. However, they also mention that this potential fails for $k=3$. Indeed, consider the depth-1-HST with four leaves $a, b, c, d$ that are at distance $1$ from the root $\rt$. Consider the configuration with online servers at $a, b, c$ and offline servers at $b,c,d$. The minimum matching has value $2$. After a simple request at leaf $d$, \DC has one server at $d$ and the other two servers reside at the root $\rt$, while the offline configuration is unchanged. Observe that the new minimum matching still has value $2$. Thus, the online algorithm incurred cost for this request, but the potential and offline cost remain unchanged. Therefore, this potential cannot be used to prove competitiveness of $\DC$ on HSTs, even for $k=3$. We were unable to find a pure potential proof that proves competitiveness of \DC beyond the case $k=2$.

\section{The Transformed Dual's Dual}
Instead of the primal LP defined in Section~\ref{sec:LP}, we could have also defined the following different primal LP. It is less intuitive than the primal of Section~\ref{sec:LP}, but it directly yields the altitude LP as its dual.

For $t\in T_s$ and $u\ne \rt$, we use variables $y_{ut}$ and $z_{ut}$ for the numbers of servers leaving and entering subtree $V_u$ at time $t$, as before. For $u=\rt$, we view $y_{\rt t}$ and $z_{\rt t}$ as the constant $0$. We also use variables $x_{ut}$, but with a different meaning than before: Now, $x_{ut}$ denotes the number of servers \emph{at} vertex $u$ at time $t$ that are \emph{not} currently serving a simple request. Thus, $x_{ut}+\1_{\{u=s_t\text{ and }t\in T_s\}}$ is the total number of servers at vertex $u$ at time $t$. We no longer use variables for the number of servers within a subtree. The following LP models the $k$-taxi problem on trees.

\begin{alignat*}{3}
&&\text{min }~~~   & \sum_{t\in T_s} \sum_{u\ne \rt} (y_{ut}+z_{ut})\\
&\dl{(A_{u,t-1})}& & y_{ut}-z_{ut} - \sum_{v\prec u} (y_{vt}-z_{vt}) = x_{u,t-1}+\1_{\{u=s_{t-1}\text{ and }t-1\in T_s\}} - x_{ut}-\1_{\{u=s_t\}}  && \forall u\in V, t\in T_s \\
&\dl{(A_{u,t-1})}&& x_{ut} = x_{u,t-1} + \1_{\{u=d_t\}} &&  \forall u\in V, t\in T_r \\
&&& y_{ut}, z_{ut}\ge 0 && \forall u\ne \rt, t \in T_s\\
&&& x_{ut}\ge 0 && \forall u\in V, t \in T_s\cup T_r
\end{alignat*}
In the first constraint, the LHS and RHS are two different ways of writing the number of servers leaving vertex $u$ at time $t\in T_s$. In particular, on the RHS we subtract the new number of servers at $u$, namely $x_{ut}+\1_{\{u=s_t\}}=x_{ut}+\1_{\{u=s_t\text{ and }t\in T_s\}}$, from the old number $x_{u,t-1}+\1_{\{u=s_{t-1}\text{ and }t-1\in T_s\}}$. Since $x_{ut}\ge 0$, it is guaranteed that there is at least one server at the requested location of each simple request. The second constraint guarantees that for $t\in T_r$, the server that was previously serving the simple request at $s_{t-1}=d_{t-1}=s_t$ is now located at $d_t$ instead, and other servers remain at their old locations.

Before we construct the dual LP we will add the additional constraint
\begin{align*}
\dl{(A_{u,2T})}\qquad\qquad 0=x_{u,2T}+\1_{\{u=s_{t}\text{ and }t\in T_s\}}- \bar{x}_{u,2T} \qquad \forall u\ne\rt
\end{align*}
to the primal LP, where $\bar{x}_{u,2T}$ are \emph{constants} specifying the configuration of \DC at the last time step. As before, this affects the optimal value by only an additive constant. We again also view $x_{u0}=\bar{x}_{u0}$ as constants describing the initial configuration of the servers (with the altered meaning of the variables $x_{ut}$). We can write the second primal constraint slightly more complicated by adding $\1_{\{u=s_{t-1}\text{ and }t-1\in T_s\}} - \1_{\{u=s_{t}\}}$ to its RHS, which is $0$ because every relocation request at time $t\in T_r$ is preceded by a simple request at $s_t=s_{t-1}$. Then we obtain the following corresponding dual:
\begin{align*}
\lefteqn{\hspace*{-1cm}\max~~~ \sum_{t\in T_s}\left[A_{s_t t} - A_{s_t,t-1}\right] + \sum_{t\in T_r} \left[A_{d_t ,t-1} - A_{s_t, t-1}\right] + \sum_{u\in V}\left[\bar{x}_{u0}A_{u0} -\bar{x}_{u,2T}A_{u,2T} \right]}  \nonumber \\ \nonumber\\
& A_{ut}-A_{p(u)t} \in[-1, 1] &  \forall u\ne \rt, t+1 \in T_s \\
& \Delta_tA_{u}\ge 0\quad&  \forall u\in V, t\in T_s\cup T_r.
\end{align*}
Expanding the term $\sum_{u\in V}\left[\bar{x}_{u0}A_{u0} -\bar{x}_{u,2T}A_{u,2T} \right]$ to a telescoping sum, we exactly recover the dual objective function from before.

\end{document}